\newtheorem {theorem}{Theorem}
\newtheorem {remark}{Remark}
\newtheorem {corollary}{Corollary}
\newtheorem{proposition}{Proposition}
\author{Eric Goles	\thanks{ANID via  FONDECYT 1200006 + STIC- AmSud CoDANet project 19-STIC-03} \\
	Facultad de Ingenier\'ia y Ciencias, Universidad Adolfo Ib\'a\~nez, Santiago, Chile \\
	\texttt{eric.chacc@uai.cl} \And
	Marco Montalva-Medel \thanks{STIC- AmSud CoDANet project 19-STIC-03.} \\
	Facultad de Ingenier\'ia y Ciencias, Universidad Adolfo Ib\'a\~nez, Santiago, Chile \\
	\texttt{marco.montalva@uai.cl} \And
	Pedro Montealegre \thanks{ANID via  PAI + Convocatoria Nacional Subvenci\'on a la Incorporaci\'on en la Academia A\~no 2017 + PAI77170068, FONDECYT 11190482, FONDECYT 1200006,  
			STIC- AmSud CoDANet project 19-STIC-03.}. \\
			Facultad de Ingenier\'ia y Ciencias, Universidad Adolfo Ib\'a\~nez, Santiago, Chile \\
			\texttt{p.montealegre@uai.cl} \And
	Mart\'in R\'ios-Wilson. \thanks{ANID via PFCHA/DOCTORADO NACIONAL/2018 – 21180910 + PIA AFB 170001, French ANR project FANs ANR-18-CE40-0002 and ECOS project C19E02} \\
	Departamento de Ingeniería Matemática, FCFM, Universidad de Chile, Santiago, Chile. \\
	Aix Marseille Univ, Université de Toulon, CNRS, LIS, Marseille, France. \\
	\texttt{mrios@dim.uchile.cl}
}
\title{On the complexity of the generalized Q2R automaton}
\date{\vspace{-5ex}}
\begin{document}
\maketitle		
\vspace{-5ex}
		\begin{abstract}
			We study the dynamic and complexity of the generalized Q2R automaton. We show the existence of non-polynomial cycles as well as its capability to simulate with the synchronous update the classical version of the automaton updated under a block sequential update scheme. Furthermore, we show that the decision problem consisting in determine if a given node in the network changes its state is \textbf{P}-Hard. 
			
		\end{abstract}
\keywords{Q2R networks, computational complexity, limit cycles, \textbf{P}-Complete}	
	
	
	
	\section{Introduction}
	
	
	
	In this paper we study the reversible cellular automata Q2R rule, introduced by G. Vichniac in the mid-80's \cite{vichniac1984simulating} as a representation of the two-dimensional Ising model for ferromagnetism \cite{lenz1920beitrag,ising1925beitrag,brush1967history}.   The Q2R rule is defined as follows. Consider a two-dimensional finite grid of even size, with periodic boundaries and von-Neumann neighborhood. Each cell has one over two possible states, namely $-1$ and $1$, which  evolve according to the following dynamic. The cells are divided in two blocks, consisting in the white and black cells of a checkerboard. On each time-step, all cells in black squares are iterated, and after that, is the turn of the cells in white squares. An iteration means to update the current state according to the following rule: if the number of neighbors in state $1$ equal the number of neighbors in state~$-1$, the cell switches its current state to the opposite one. Otherwise, the cell remains in its current state. Formally, the local transition function $f$ of Q2R~is:
	
	$$ f_i(x)= \left\{  \begin{array}{cl} 1-x_i , & \textrm{if } \displaystyle\sum_{j\in N(i)} x_j = 0\\ x_i ,& \textrm{otherwise} \end{array}\right. , $$
	where $x \in \{-1, 1\}^n$ is a configuration of $n$ cells, $i$ is a cell and $N(i)$ represent the set of cells 
	in the von-Neumann neighborhood of $i$.

	Several studies have carried out concerning the Q2R \cite{vichniac1984simulating,pomeau1984invariant,herrmann1986fast,hermann1987periods,takesue1991relaxation,cordero1992q2r+,goles2011irreversibility,urbina2016master,montalva2020phase}. Remarkably, the Q2R dynamic preserves an Ising-like energy \cite{pomeau1984invariant}, appealing the analogy with the continuous dynamics of Hamiltonian systems.  Indeed, let us consider the following energy operator: 
	
	$$ \textit{E}(x) = -\frac{1}{2}\sum_{ij }w_{ij}x_ix_j$$ 
	where $x$ is a grid configuration of states, and $w_{ij}$ 
	is equal to $1$ if cell $i$ is adjacent to cell $j$, and 0 otherwise.
	Let us call $x'$ the configuration obtained after the iteration of one of the two parts (cells in black squares or cells in white squares), then
	
	$$\Delta\textit{E} = E(x') - E(x) = -\frac{1}{2} \sum_i (x'_i-x_i)\sum_{j\in N(i)}x_j = 0,$$
	which implies that the energy is preserved during the dynamics of Q2R. Besides, since the dynamics of Q2R is reversible, every configuration has a unique predecessor. In particular, the dynamics does not exhibit a transient state. Thus, each configuration is in some attractor, i.e., the configurations are fixed points or they belong to a limit cycle of some period.
	
	Recently, in an attempt to establish general mathematical properties, a full characterization and combinatorial results of the attractors associated to the Q2R model were proposed in \cite{montalva2020phase}. In tune with this mathematical approach, in this paper we tackle an analytical study of the dynamics and complexity of Q2R. In this case, we consider a \emph{generalized version of Q2R}, extending its definition to a topology more general than the two-dimensional grid. 
	
	An \emph{automata network} is a natural generalization of a finite (or periodic) cellular automata, 
	where the topology representing the interactions on the cells is generalized into an arbitrary graph. We extend the Q2R in this direction. Nevertheless, to preserve the reversibility of the system and a coherent definition of the local rule, we do not extend this rule into an arbitrary topology, but into a graph that is bipartite and where each node has an even number of neighbors. 
	
	Formally, we consider the family of graphs $\mathcal{G}$, such that each graph $G\in \mathcal{G}$ is a bipartite graph with partitions $A$ and $B$, and each vertex has even degree. Over each of these graphs $G$, we define the generalized Q2R as follows. Each node of $G$ is assigned a state in $\{-1,1\}$, which evolves according to the following rule. On each time-step, all the nodes in $A$ are synchronously iterated, and after that, all the nodes in $B$ are synchronously iterated. In other words, the sets $A$ and $B$ play the role of black and white squares in the chessboard. The iteration of a node consists in the application of the local rule, that switches the current state of the node when the number of neighbors in state $-1$ is the same than the number of neighbors in state $+1$.  
	
	Interestingly, in a one dimensional grid, the Q2R rule corresponds to the elementary cellular automata 150 on finite configurations, updated following the bipartite partition induced by the cells in even and odd coordinates.
	Numerically, it has been reported that the classical Q2R rule can exhibit a huge number of limit cycles with hypothetic exponentially long periods \cite{hermann1987periods}. 
	
	
	The computational complexity of an automata network can be defined as the amount of resources, like time or space, needed to make predictions over it. More formally, the \emph{prediction problem} consists in given an automata network, an initial condition, an objective node in the network and a fixed time $t$, to predict if the objective node will change its state after $t$ time steps. From a classical computational complexity standpoint, the latter problem can be studied as a decision problem. Since decision problems are classified in \emph{complexity classes}, with the objective of grouping problems of similar difficulty, a natural question is in which class we can classify the prediction problem for Q2R networks.  In this sense, the numeric behavior of Q2R represent a great challenge for the simulation of the dynamics of Q2R.  In fact, since the dynamics of Q2R can exhibit long cycles, one might have to simulate a huge number of time-steps before deciding. Interestingly,  up to our knowledge, the computational complexity of the prediction problem for Q2R is unknown.

	\subsection{Our results}
	
	During this paper, we focus in studying the dynamical behavior and the complexity of the generalized Q2R model. First, we give examples 
	of graph topologies in which the generalized Q2R automata network can exhibit super-polynomial limit cycles. Our construction consists on a specific graph, called cycle-graph, which given a particular initial configuration reach a limit cycle that is proportional to the number of nodes in the cycle-graph. Then, we show how to connect two different or more cycle-graphs in such a way that the dynamics of each cycle-graph is preserved. We obtain this way a new graph on which we define a configuration reaching a limit cycle of a length that is proportional to the least common multiple of the sizes of the cycle-graphs composing it. The result then is obtained by constructing a graph composed of cycle-graphs of different prime numbers.
	
	We emphasize that, even if numerical evidence suggests that Q2R can exhibit super-polynomial (or even exponential) limit cycles, our result is the first explicit example of a (generalized) Q2R dynamic where a super-polynomial lower-bound on the limit cycles can be analytically proven. 
	
	Later, we tackle the computational complexity of the rule, showing that the prediction problem is \textbf{P}-hard. This roughly implies that the best (only) strategy for knowing the future state of a node is to simply simulate the dynamics for a given number of time-steps. In other words, unless some complexity classes collapse, there are no structural or dynamical properties of the network that could be algorithmically exploited in order to solve the prediction problem faster than naive simulation. Our result is obtained by a reduction of the \emph{circuit value problem}.
	
	Finally, we show that the dynamics of the (generalized) Q2R networks updated by a bipartite partition can be simulated in another Q2R network with synchronous update. Such a network, that we will call it {\it Parallel Simulator (PS)},  is composed by four subnetworks, each one having a specific function. The latter network simulates one step of the dynamics of the original Q2R network every two time steps.
	
	
	The paper is organized as follows: Section \ref{sec:prelim} summarizes the basic mathematical
	concepts and definitions that we will use along the manuscript. 
	
	In Section \ref{sec:q2r-pspace-c} we exhibit Q2R networks having non-polynomial limit cycles. We accomplished this task by exhibiting for each prime number $p \in \mathbb{N}$ some Q2R network having attractors of period $\mathcal{O}(n)$.
	
	In Sections \ref{sec:pred} and \ref{sec:gadgets} we study the  complexity of the prediction problem. In particular, we show in Section \ref{sec:pred} that the latter problem is \textbf{P}-Hard by constructing a log-space reduction from a version of the classical circuit value problem (\textbf{CVP}) in which all the gates are only monotone gates (OR and AND gates).
	
	Finally, in Section \ref{sec:ps} we show that we can simulate an arbitrary  Q2R network (which is defined by bipartite partition) by some Q2R rule iterated in parallel (every vertex at the same time). This construction uses a polynomial amount of space in the size of the original network. Finally, we give some conclusions and future developments.  
	
	
	\section{Preliminaries}\label{sec:prelim}
	
	An {\it Automata Network} is a triple $\mathcal{A} = (G,Q,(f_i: i\in V))$, where $G = (V,E) $ be an undirected finite bipartite graph,  $V = \{1, ..., n\}$ is the set of  vertices, $E$ the set of edges and $Q = \{-1,1\}$ is the set of states. The state 1 means that the vertex is {\it active}, while state -1 represents {\it passive} vertices;  $f_i: Q^{\vert V \vert} \rightarrow Q$ is the transition function associated to the vertex $i$. The set $Q^{\vert V \vert}$ is called the set of configurations, and the automaton's global transition function $F: Q^{\vert V \vert} \rightarrow Q^{\vert V \vert}$, is constructed from the local functions $ (G,Q,(f_i: i\in V))$ such that $(F(x))_i = f_i(x)$.  Let $N(v)$ be the neighborhood of $v$, i.e. the set of vertices $\{u \mid uv \in E\}$. Suppose that
	$\{A,B\}$ is a partition of V such that for $X\in\{A,B\}$ and for any $i,j \in X$ ,  $N(i)\cap N(j)=\emptyset$ (such partition exists because G is bipartite). Further, consider that vertices have even degree, i.e., $d(v)\equiv |N(v)|$ is even.  We define a Q2R network as the tuple $\mathcal{Q} = (G=(V=(B,W), E), F)$ where $G$ is a bipartite graph with bipartition $V=(B,W)$ and $F: \{0,1\}^{V} \mapsto \{0,1\}^{V}$ is such that $F(x)_{i} = f_{i}(x|_{N(i)})$ for all $x \in \{0,1\}^{V}$ where $ f_i(x)= \left\{  \begin{array}{cl} 1-x_i , & \textrm{if } \displaystyle\sum_{j\in N(i)} x_j = 0\\ x_i ,& \textrm{otherwise} \end{array}\right. , $ and $x|_{N(i)} \in \{0,1\}^{V}$ is such that $(x|_{N(i)})_{v} = x_{v}$ for all $v \in N(i).$
	
	\subsection{Elements of computational complexity}
	The computational complexity of a decision problem is defined as the amount of resources (time or space) needed to give an answer. The classical complexity theory consider two fundamental classes: {\bf P}, the class of problems solvable in polynomial time on a serial computer; and {\bf NC}, the class of problems solvable in poly-logarithmic time with a polynomial amount of processors in a parallel architecture, for instance a PRAM machine \cite{jaja}. It is easy to prove that ${\bf NC} \subset {\bf P}$ \cite{jaja}. Informally {\bf NC} is known as the class of problems which have a fast parallel algorithm \cite{limitsofpara}. It is a well known conjecture that \textbf{NC} $\neq$ \textbf{P}, and if so, there exist ``inherently sequential'' problems, this is, that belong to \textbf{P} and do not belong to \textbf{NC}. The most likely to be inherently sequential are {\bf P}-Hard problems, to which any other problem in \textbf{P} can be reduced by an \textbf{NC}-reduction or a logarithmic space reduction. If a {\bf P}-Complete problem (i.e. a {\bf P}-Hard problem contained in {\bf P}) has a fast parallel algorithm, then \textbf{P}=\textbf{NC} \cite{limitsofpara,CMoore}.
	
	One of such problems is the {\it Circuit Value Problem} ({\bf CVP}), which consists in predicting the truth value of the output of a Boolean circuit. This problem is \textbf{P}-complete since any deterministic Turing machine computation of length $k$ can be converted into a Boolean circuit of depth $k$;  a complete analysis of this reduction can be found in \cite{limitsofpara}. 
	Given a circuit, we define the {\it layer} of a gate $v$, denoted $layer(v)$, as follows: it is zero for the input gates and the 
	length of the longest path from an input to $v$. A circuit is {\it synchronous} if all inputs to a gate $v$ come from gates at precedent layer. Furthermore, we require that all output vertices be on the same layer, namely the highest \cite{limitsofpara}. 
	A circuit is monotone if there are no negations gates (only AND and OR gates); it is alternating when the the gates alternate between OR and AND gates layer by layer, and the inputs are connected only to OR gates, and the outputs being OR gates.
	
	The {\bf CVP} remains \textbf{P}-complete when the circuit is restricted to be synchronous, monotone, alternating  and all vertices have in degree (fan in) and out degree (fan out) exactly two, with the obvious exceptions of the input with in degree zero, and the outputs with out degree zero \cite{limitsofpara}. We call {\bf AS2MCVP} this restriction of the {\bf CVP}. 
	
	\subsubsection{Prediction problem}
	
	We start by providing a formal definition for the prediction problem in Q2R networks. \\
	
	\begin{tcolorbox}
		\textbf{PRED}
		
		\textbf{INPUT: } A Q2R network $\mathcal{Q} =  (V=(B,W),E,F)$, an initial condition $x \in \{0,1\}^{n}$,  a time step $t \in \mathbb{N}$ and an objective node $v \in V.$
		
		\textbf{QUESTION:} $F^{t}(x)_{v} \not = x_{v}?$
	\end{tcolorbox}
	
	
	
	
	
	Observe that a possible solution for \textbf{PRED} is just to simulate the system in order to compute $F^{t}(x)_{v}$ and see if it has changed its state.  Thus, a natural question is if one can do better than simple simulation. As we will see in Section  \ref{sec:pred}, the problem $\textbf{PRED}$ is $\textbf{P}$-hard, which suggest, together with the fact that Q2R networks may have limit cycles of non-polynomial period, that there is unlikely that a better solution than direct simulation exists.  
	\section{Non polynomial limit cycles for Q2R}\label{sec:q2r-pspace-c} 
	
	In this section we construct a family of simple one dimensional networks (rings)  which admits a prime cycle depending of the ring size. Further by connecting those networks we determine periods which are roughly the product of elementary rings. The following Proposition and consequent Corollary states that Q2R networks can exhibit large limit cycles.
	
	\begin{proposition}\label{lem:r150-lp}
		If $p\geq 2$ is a prime number, then there exists a connected Q2R network of $2p$ vertices having, at least, one limit cycle of length $p$.
	\end{proposition}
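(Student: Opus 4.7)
The plan is to take as graph the cycle $C_{2p}$ on vertices $\{0,1,\ldots,2p-1\}$ with edges between consecutive vertices modulo $2p$. This graph is connected, bipartite under the partition $A$ of even-indexed vertices and $B$ of odd-indexed vertices, and has constant degree $2$, so it belongs to the admissible family $\mathcal{G}$. The initial configuration I choose is $x^{(0)}$ with $x^{(0)}_0 = x^{(0)}_1 = 1$ and $x^{(0)}_i = -1$ for $2 \leq i \leq 2p-1$: two adjacent $+1$s on a background of $-1$s.

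The main claim to establish is a \emph{shift by two} invariant: if at some time $t$ the two $+1$s sit at consecutive positions $2k$ and $2k+1$ with everything else equal to $-1$, then at time $t+1$ they sit at $2k+2$ and $2k+3$ (indices mod $2p$), again on a background of $-1$s. I verify this by a direct case analysis on the two half-steps. In the first half-step (updating $A$), the only even-indexed vertices whose two neighbors sum to zero are $2k$ (neighbors $2k-1=-1$ and $2k+1=+1$) and $2k+2$ (neighbors $2k+1=+1$ and $2k+3=-1$); all other even vertices see two $-1$s and keep their state. So after updating $A$, the $+1$s sit at $2k+1$ (still, since $B$ has not been touched) and $2k+2$. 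The second half-step then does the symmetric job on $B$: only $2k+1$ and $2k+3$ see balanced neighborhoods under the freshly updated $A$-values, so they toggle, leaving the pattern at $2k+2$ and $2k+3$.

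With the shift invariant in hand, iterating $p$ times translates the pattern by $2p \equiv 0 \pmod{2p}$, so $F^p(x^{(0)}) = x^{(0)}$ and the orbit length divides $p$. Since $p$ is prime, this length is either $1$ or $p$; but the first step already changes the configuration, ruling out a fixed point. Hence the period equals exactly $p$, yielding the desired limit cycle of length $p$ on a connected Q2R network with $2p$ vertices.

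The only mildly delicate point is the bookkeeping of the block-sequential update in the first half-step, since the freshly updated values in $A$ are what feed into the update of $B$; care is needed to distinguish the ``half'' configuration from the ``new'' configuration. Aside from this, the argument is a short, concrete verification with no essential obstacle.
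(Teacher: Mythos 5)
Your proof is correct, and it takes a genuinely different dynamical route from the paper's, even though the underlying network is the same ring $C_{2p}$ with the rule-150/Q2R local function. The paper seeds the ring with a \emph{single} active vertex and tracks a symmetric wave that expands in pairs around both sides of the ring until only the antipodal vertex $p+1$ remains passive, then contracts back to the seed; this forces a separate treatment of $p=2$ and a somewhat delicate count of the times at which pairs activate and deactivate, summing to period $p$. You instead seed with \emph{two adjacent} active vertices and prove a clean shift-by-two invariant, so the pattern simply travels around the ring and returns after exactly $p$ steps. Your half-step bookkeeping (updating $A$ with old values, then $B$ with the freshly updated $A$-values) checks out, and the argument covers $p=2$ uniformly. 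Two small remarks: primality is not actually needed in your version, since the support of the configuration at time $t$ is $\{2t, 2t+1\}$ and these sets are pairwise distinct for $t=0,\dots,p-1$, so the period is exactly $p$ for every $p\geq 2$ directly (your divisibility argument via primality is valid but slightly indirect); and since the statement is later reused via Remark 1(i) and Corollary 1 --- which rely on the paper's specific orbit having two vertices (the seed and its antipode) whose states never change, so that distinct rings can be glued at those vertices without disturbing the dynamics --- your traveling-pair orbit would not plug into that gluing construction as stated, though it fully suffices for the proposition itself.
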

	\begin{proof}
		Let $p\geq 2$ be a prime number and considerer the Q2R network of Figure \ref{fig:r150-lp} defined by the ECA rule 150 with $2p$ vertices, where the odd vertices are updated first and, in second place, the even ones. 
		\begin{figure}[h] 
			\centering
			\includegraphics[scale=0.9]{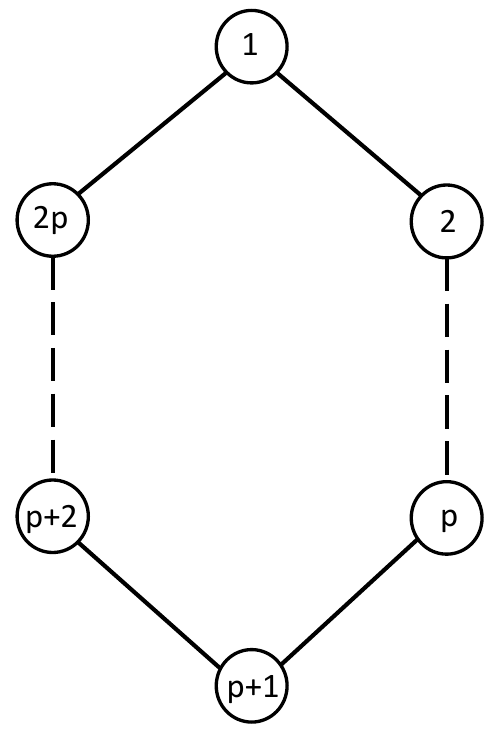} 
			\caption{Q2R network of $2p$ vertices ($p\geq 2$ a prime number).}
			\label{fig:r150-lp}
		\end{figure}
		
		Let $x=1\overrightarrow{0}$ be the configuration of length $2p$, where all its ($2p-1$) states are in 0, except for the first one, whose state is 1. If $p=2$ then we obtain the limit cycle of length two: $1000\longleftrightarrow 1101$. If $p\geq 3$, $p$ prime, notice the key fact: except for vertices 1 and $p+1$, the others ones always are updated in pairs, i.e., 
		$$x_k = x_{2p-(k-2)}, \,\forall k\in\{2,...,p\}.$$
		Thus, at each time step, the active vertices are propagated in pairs (or blocks of pairs) until reach the configuration in which only the vertex $p + 1$ is passive and, after that, the passive vertices are propagated until reach the initial configuration $x=1\overrightarrow{0}$. Specifically, at $t=1$ only vertex 1 is active (the initial configuration $x=1\overrightarrow{0}$), at $t=2$ also are activated the vertices $2$ and $2p$ (the configuration $x=11\overrightarrow{0}1$ of length $2p$) but, from $t=3$ two pairs of vertices are activated; the activation of vertices 3 and 4 (consequently, vertices $2p-1$ and $2p-2$) for $t=3$,  the activation of vertices 5 and 6 (consequently, vertices $2p-3$ and $2p-4$) for $t=4$, and so on, until the activation of vertices $p-2$ and $p-1$ (consequently, vertices $p+2$ and $p+3$) for $t=2+\frac{p-3}{2}$. Next, at $t=3+\frac{p-3}{2}$ we have reached the configuration in which only the vertex $p + 1$ is passive. This process is repeated by deactivating two pairs of vertices until reach the last configuration of the limit cycle at $t=\left(3+\frac{p-3}{2}\right)+\frac{p-3}{2}$ (see Figure \ref{fig:q2r-evol}).
		
		\begin{figure}[h] 
			\centering
			\includegraphics[scale=0.55]{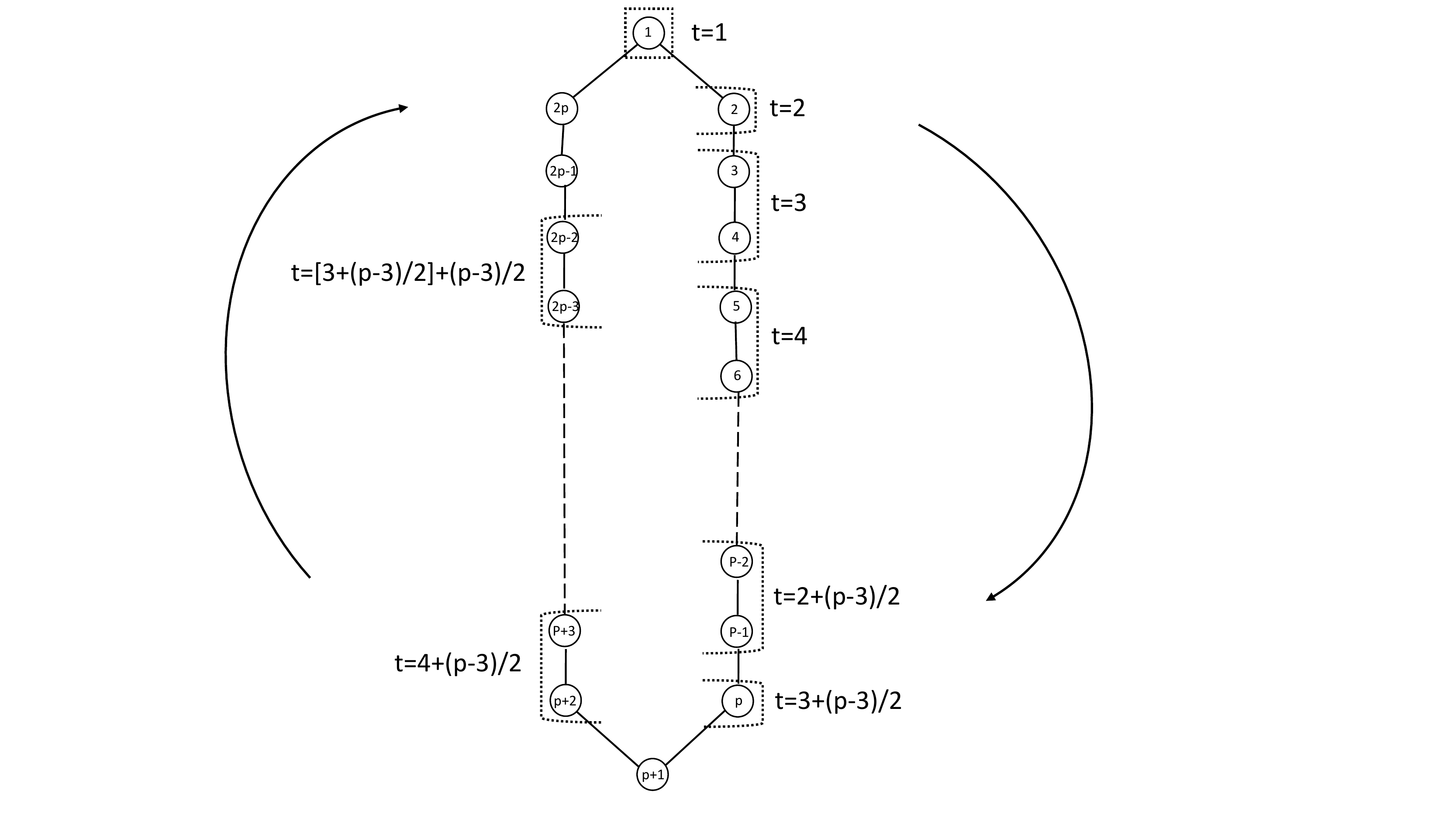} 
			\caption{Evolution of the initial configuration $x=1\overrightarrow{0}$ (of length $2p$) in the Q2R network of Figure \ref{fig:r150-lp}. The dashed brackets indicate the vertices involved in the corresponding time step, according to what is detailed in the proof of Proposition \ref{lem:r150-lp}. }
			\label{fig:q2r-evol}
		\end{figure}
		
		Therefore, we have a limit cycle of length $p$.
	\end{proof}
	
	\begin{remark}\label{rem:q2r-2p-dynamic}
		Notice the following facts for the Q2R networks of $2p$ vertices described in the proof of Proposition \ref{lem:r150-lp}:
		\begin{itemize}
			\item[(i)] In the configuration $x=1\overrightarrow{0}$, the state of the first vertice always remains fixed to 1 (at every time step, it never changes to 0) because of the key fact: $x_2 = x_{2p}$. Analogously, the state of the vertice $p+1$ always remains fixed to 0 (at every time step, it never changes to 1) because of the key fact: $x_p = x_{p+2}$.
			\item[(ii)] They only have four fixed points which are: \overrightarrow{0}, \overrightarrow{1}, \overrightarrow{01} and \overrightarrow{10}.
		\end{itemize}
	\end{remark}
	
	\begin{corollary} \label{coro:q2r-exp-lc} There exists a connected Q2R network of $2\cdot (p_1+p_2+\cdots +p_k)$ vertices having, at least, one limit cycle of length $p_1\cdot p_2\cdots p_k$, where $p_1$,..,$p_k$ are $k$ different prime numbers. In particular, there exists a Q2R network which exhibits attractors of period $T \geq e^{\Omega(\sqrt{\vert V(G) \vert \log(\vert V(G) \vert) })}$
	\end{corollary}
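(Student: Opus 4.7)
The plan is to apply Proposition \ref{lem:r150-lp} once for each prime $p_i$ to obtain cycle-graphs $G_1,\dots,G_k$ of sizes $2p_1,\dots,2p_k$, each exhibiting, from the initial configuration $1\overrightarrow{0}$, a limit cycle of length $p_i$. I would then build a single connected Q2R network $G$ on the disjoint union of the vertex sets (total $2\sum_i p_i$ vertices) by adding carefully chosen edges between the rings so that (i) $G$ remains bipartite, (ii) every vertex retains an even degree, and (iii) the orbit of each $G_i$ from $1\overrightarrow{0}$ is left invariant. Taking as initial condition the concatenation of the $k$ local configurations $1\overrightarrow{0}$, every ring then evolves in $G$ exactly as in isolation, and the joint period of this trajectory is $\mathrm{lcm}(p_1,\dots,p_k) = p_1 p_2 \cdots p_k$, since the $p_i$ are pairwise distinct primes. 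This yields the first half of the corollary.

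The delicate point is (iii). Here I would exploit Remark \ref{rem:q2r-2p-dynamic}(i): on the orbit of $G_i$ the vertex labelled $1$ is frozen at state $1$ and the vertex labelled $p_i+1$ is frozen at state $0$. The inter-ring edges would be placed only at these frozen vertices and in \emph{balanced} pairs: each pair consists of one edge to a state-$1$ frozen vertex and one edge to a state-$0$ frozen vertex, so that at the other endpoint the added contribution to the local sum increases the count of $0$-neighbours and of $1$-neighbours by the same amount. The Q2R flip condition at that endpoint is therefore equivalent to the one it had in its own ring, and the unmodified dynamics inside every $G_i$ is preserved. One must also arrange, via suitable multiplicity/pairing of the new edges, that each frozen vertex retains an even degree and that the bipartition (which for the vertex labelled $p_i+1$ depends on the parity of $p_i$) is respected. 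This bookkeeping, executed simultaneously for all $k$ rings without introducing any auxiliary vertex, is the step I expect to be the main obstacle.

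For the asymptotic ``in particular'' statement I would instantiate the construction with the first $k$ primes $p_1<\cdots<p_k$ and appeal to the Prime Number Theorem. Setting $x = p_k$, Chebyshev's estimate gives $\vartheta(x) = \sum_{p\leq x}\ln p = (1+o(1))\,x$, hence
$$ T \;\geq\; \prod_{p\leq x} p \;=\; e^{\vartheta(x)} \;=\; e^{(1+o(1))x}, $$
while $|V(G)| = 2\sum_{p\leq x} p = (1+o(1))\,x^2/\ln x$. Solving the latter relation for $x$ yields $x = \Omega(\sqrt{|V(G)|\,\log|V(G)|})$, and substituting back gives $T \geq e^{\Omega(\sqrt{|V(G)|\,\log|V(G)|})}$, as required.
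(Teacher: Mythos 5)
Your proposal matches the paper's own proof: the paper connects the rings $Q_1,\dots,Q_k$ precisely through their frozen vertices (the first and the $(p_i+1)$-st of each ring, held at $1$ and $0$ by Remark \ref{rem:q2r-2p-dynamic}(i)), which is exactly your balanced-pair scheme for leaving each ring's orbit invariant, and the asymptotic bound is obtained by the same Chebyshev/Prime Number Theorem computation via $\theta$ and $\pi$. The wiring bookkeeping you flag as the main obstacle is resolved in the paper only by pointing to a figure, so your argument is at least as explicit as the one it is being compared to.
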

	\begin{proof}
		Consider the connected network $G$ of Figure \ref{fig:q2r-superpol}.
		\begin{figure}[h] 
			\centering
			\includegraphics[scale=0.5]{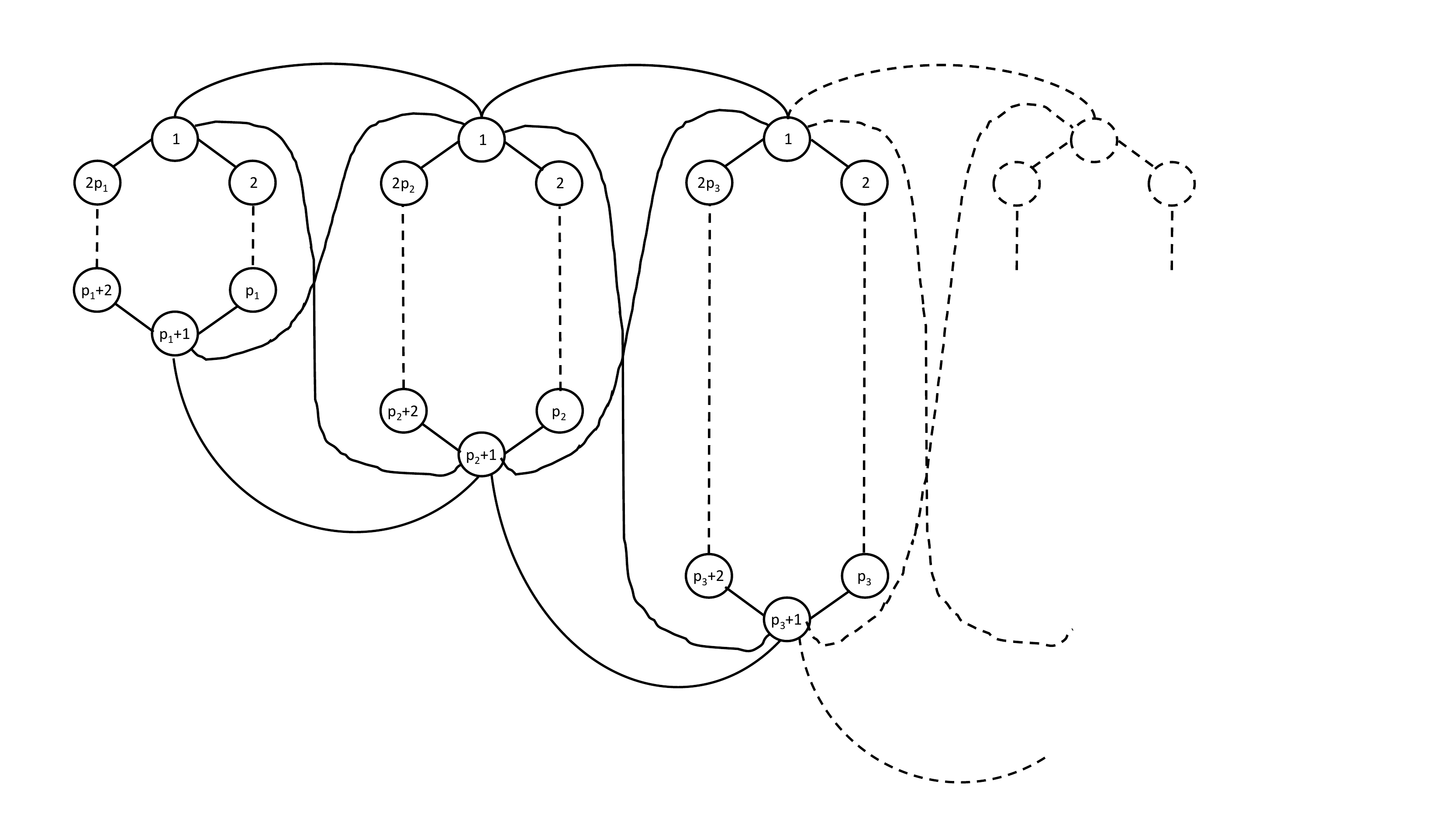} 
			\caption{Q2R network composed by the subnetworks $Q_1$,...,$Q_k$ as such of the proof of Proposition \ref{lem:r150-lp}, with prime sizes $p_1 < p_2 < p_3 <...< p_k$, respectively, and joined through its first and $p+1$ vertices.}
			\label{fig:q2r-superpol}
		\end{figure}
		
		Let $x$ be the configuration of length $2\cdot (p_1+p_2+\cdots +p_k)$, where all its states are in 0, except for the first vertices of each subnetwork $Q_i$, whose states are 1, $i\in\{1,...,k\}$. By the proof of Proposition \ref{lem:r150-lp} and Remark \ref{rem:q2r-2p-dynamic}-(i), each subnetwork $Q_i$ will produce a limit cycle of length $p_i$, $i\in\{1,...,k\}$. Therefore, the whole network $G$ will have a limit cycle of length $p_1\cdot p_2\cdots p_k$.

		To obtain a lower bound of the periods, we follow  the arguments developed in \cite{Kiwi:1994uj}.  Let $m$ a positive integer, and let $l = \pi(m)$  the number of primes not exceeding $m$. Let $G$ the graph obtained from $\pi(m)$ subnetworks of sizes $p_1, p_2 \dots, p_{\pi(m)}$, where $\{ p_1, p_2 \dots, p_{\pi(m)} \} $ the first $\pi(m)$ primes. We have then that 
		\begin{eqnarray}\label{eq3}
			V(G) \leq \sum_{i=1}^{\pi(m)} 2(p_i+1) \leq 2\pi(m)(m+1) 
		\end{eqnarray}
		and 
		\begin{eqnarray}\label{eq4}
			lcm(p_1, \dots, p_{\pi(m)}) = \prod_{i=1}^{\pi(m)} p_i =e^{\theta(m)} 
		\end{eqnarray}
		where $\theta(m) =  \sum_{i=1}^{\pi(m)} \log(pi)$. From the Prime Number Theorem \cite{hardy2008introduction}  we know that $\pi(m) = \Theta(m/\log(m))$, furthermore in \cite{hardy2008introduction}  is shown that $\theta(m) = \Theta(\pi(m)\log(m))$, which together with (\ref{eq3}) and (\ref{eq4})   imply that $$lcm(p_1, \dots, p_{\pi(m)})  \geq e^{\Omega(\sqrt{\vert V(G) \vert \log(\vert V(G) \vert) })}$$
		and then the length of the limit cycle of $G$ is not bounded by any polynomial in $\vert V(G) \vert$.

	\end{proof}

	\section{Computational Complexity of PRED}\label{sec:pred}

	
	We will give now a proof of {\bf P}-Hardness of the Q2R rule, reducing it to a restricted case of {\bf CVP}. 
	
	\begin{proposition}\label{lemma2} 
		For the Q2R rule, {\bf PRED} is {\bf P}-Hard.
	\end{proposition}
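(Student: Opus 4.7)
The plan is to construct a log-space reduction from \textbf{AS2MCVP} to \textbf{PRED}. Given an alternating, synchronous, monotone circuit $C$ with fan-in and fan-out exactly two, I would build a Q2R network $\mathcal{Q}_C$ together with an initial configuration $x$, a target vertex $v$, and a time bound $t$ so that the designated cell $v$ flips by time $t$ precisely when $C$ evaluates to true. The synchronous alternating structure of \textbf{AS2MCVP} is particularly convenient here: because every gate receives both its inputs from the same preceding layer, the reduction can be organised layer by layer, and because only monotone gates of two types appear with bounded fan-in and fan-out, only a small catalogue of gadgets is needed.

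The technical core of the argument is the design and local verification of these gadgets. I would need at least a \emph{wire} gadget propagating a signal along a path at fixed speed, an \emph{AND} gadget and an \emph{OR} gadget combining two incoming signals, a \emph{fan-out} gadget duplicating a signal onto two outgoing wires, and probably \emph{delay} gadgets to realign signals travelling along paths of different lengths. A natural ansatz, suggested by the cycle-graph construction in the proof of Proposition~\ref{lem:r150-lp}, is to represent truth values not as static cell states but as propagating ``pulses'' on an otherwise quiescent background, so that one layer of $C$ corresponds to a fixed number of Q2R time steps and the waves collide at gate gadgets to produce the correct boolean combination. Each gadget would be verified by case analysis on its input patterns, then composed according to the topology of $C$, the initial configuration $x$ chosen to inject the pulses corresponding to the input bits, and $t$ set to the per-layer cost times the depth of $C$.

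The main obstacle is the interaction between this gadget design and the structural constraints of a Q2R network: the ambient graph must be bipartite and every vertex must have even degree, so each junction between gadgets is likely to require auxiliary padding vertices or edges to preserve parity and bipartition, and checking that these adjustments compose cleanly across arbitrary circuit topologies is delicate. In addition, Q2R is reversible with the unusual ``balanced neighbourhood'' flip rule, so a gate gadget cannot simply overwrite its output; it must route the signal in a way consistent with the unique-predecessor property, typically by producing a disjoint ``memory trail'' of auxiliary cells that never feeds back into the live computation before time $t$. Once the gadgets are locally correct, global correctness follows by a straightforward induction on the layer of $C$, the total size of $\mathcal{Q}_C$ is polynomial in $|C|$, and the placement of the gadgets together with the parameters $x$, $v$, and $t$ can be computed using $O(\log |C|)$ workspace, so the reduction is in log space, which establishes that \textbf{PRED} is \textbf{P}-hard.
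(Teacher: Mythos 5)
Your plan coincides with the paper's strategy: a log-space reduction from the synchronous monotone fan-in/fan-out-2 variant of \textbf{CVP}, with truth values encoded as pulses travelling on a quiescent all-$(-1)$ background, gate gadgets composed layer by layer, and the time bound $t$ set to the per-layer cost times the depth. However, what you have written is a proof plan rather than a proof: every gadget is named but none is constructed or verified, and for this statement the gadgets \emph{are} the proof. In particular, you should not take for granted that an OR gadget exists ``by case analysis.'' The Q2R local rule flips a cell exactly when its neighborhood is balanced, which is an equality-type condition, not a threshold; this makes an AND gadget easy (the paper uses a single initially passive degree-4 vertex whose two input neighbors carry the incoming signals and whose two output neighbors are initially passive, so it flips iff both inputs arrive) but makes a direct OR gadget genuinely problematic, since ``at least one input active'' is not expressible as a balance condition at a single vertex. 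The paper resolves this by building a XOR gadget (using two permanently active auxiliary vertices on a higher-degree cell), deriving NOT from $\neg x = 1 \oplus x$, and then obtaining OR via De Morgan from NOT and AND. That detour is the one nontrivial idea in the construction, and your proposal does not anticipate it.

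Two smaller remarks. First, you correctly flag the even-degree and bipartiteness constraints and the reversibility of the dynamics as obstacles, but resolving them is again a matter of exhibiting the concrete wire and gate structures (the paper's wire is a ladder of squares along which a block of active states advances one step per update round, and the ``memory trail'' you predict is exactly what happens: a finite wire unweaves itself after the signal reaches the end). Second, your worry about delay gadgets is largely absorbed by the synchronicity hypothesis on the circuit, which is why the paper insists on reducing from the synchronous restriction; this is worth stating explicitly rather than leaving as a ``probably.'' As it stands, the argument is the right skeleton with the load-bearing constructions missing.
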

	\begin{proof}
		
		In order to show the result we will reduce {\bf S2MCVP} to \textbf{PRED}. Since {\bf S2MCVP} is \textbf{P}-complete, if the reduction uses only a logarithmic space, then {\bf PRED} would be \textbf{P}-hard.
		
		More precisely, we show that, given an arbitrary synchronous monotone alternating circuit of fanin and fanout $2$, there exists a Q2R network which simulates the evaluation of the circuit. Since the circuit is monotone, we only have to simulate the AND and OR gates. 
		
		
		
		The AND gate (see Figure \ref{ANDgate}) is simulated by an initially passive middle vertex with degree $4$; two of them will be the inputs, initially active and the other two the outputs, initially passive. By the Q2R rule, this vertex will become active only if the two input neighbors become active. The OR gate is obtained as a composition of the NOT gate and the AND gate as it is shown in the next section. Now, consider an instance of \textbf{S2MCVP} given by some circuit $C,$ an assignation $x \in \{0,1\}^{n},$ and some output $o$. We construct a Q2R network $\mathcal{Q}_{C} = (G=(V=(B,W),E), F)$ which has for each gate $v$ of $C$, the correspondent gadget. Let us call $(i_{1},..., i_{n})$ to the inputs of $C$. We identify one gadget in $\mathcal{Q}_{C}$ with each of those inputs. Observe that each gadget has its own input nodes as it is shown in the next section. Thus, $(i_{1},..., i_{n})$ can be identified with the inputs of the gadgets representing the first layer of the circuit. We define an initial condition $\overline{x} \in \{0,1\}^{V}$ given by $\overline{x}_{i_{k}} = x_{k}$ for $1 \leq k \leq n.$ and $\overline{x}_{v} = 0$ for any other node $v.$ Since each gate is monotone, each gadget is constant in the number of size of $C$ (and also they simulate each gate in constant time) and the circuit is synchronous, $\mathcal{Q}_{C}$ simulates $C$ in time $t = n^{\mathcal{O}(1)}.$ Finally, as the size of each gadget is constant, the latter construction can be done in space $\mathcal{O}(\log n)$. The proposition holds.
	\end{proof}
	
	\section{Gadgets to show  the \textbf{P}-hardness of \textbf{PRED}}\label{sec:gadgets} 
	%
	
	\noindent {\bf Wire:} the 1s (black nodes) go down following the wire (starting from top left, the first 4 iterations of Figure \ref{wire}).
	\begin{figure}[h] 
		\begin{center}    
			\includegraphics[scale=0.5]{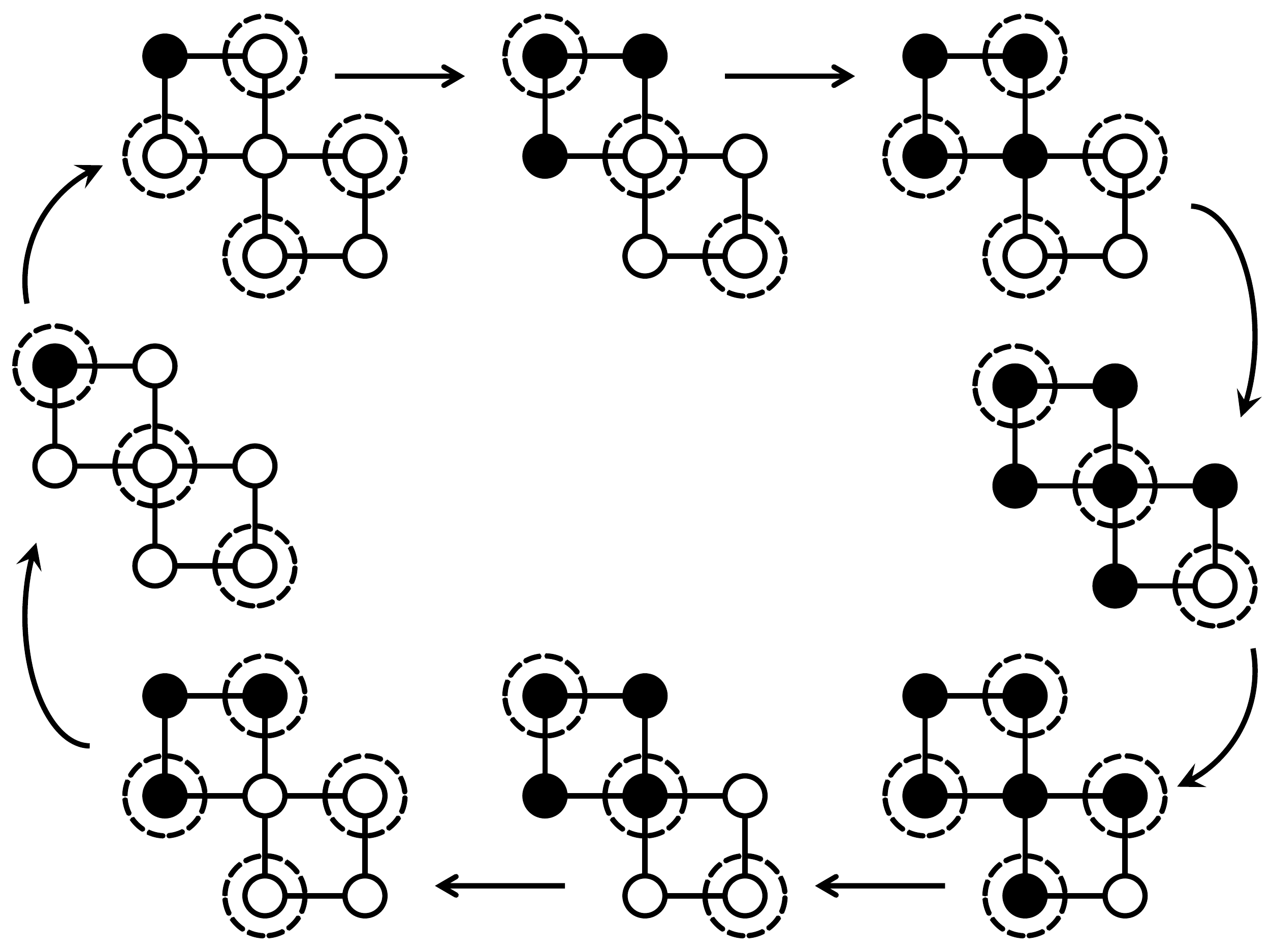}
		\end{center}
		\caption{Evolution of a 7 node wire where blacks ones indicate states at 1 and the whites ones at -1. The dashed circles indicate the nodes that will be updated at each time step.}
		\label{wire}
	\end{figure}
	
	\noindent {\bf Remark:} if the wire is finite, say the case of Figure \ref{wire} with two squares, when the last ``down'' is reached, the -1s works as the 1s ``unweaving'' the weaving (starting from top left, see the whole cycle of period 8 in Figure \ref{wire}).  
	
	\subsection{The AND gate}\label{sec:ANDgate}
	Consider the wire of Figure \ref{ANDgate} where $x,y\in\{-1,1\}$ are the inputs, $z$ is the output (initially at -1) and the remaining ten nodes with -1 values (white nodes). 
	\begin{figure}[h] 
		\begin{center}    
			\includegraphics[scale=0.3]{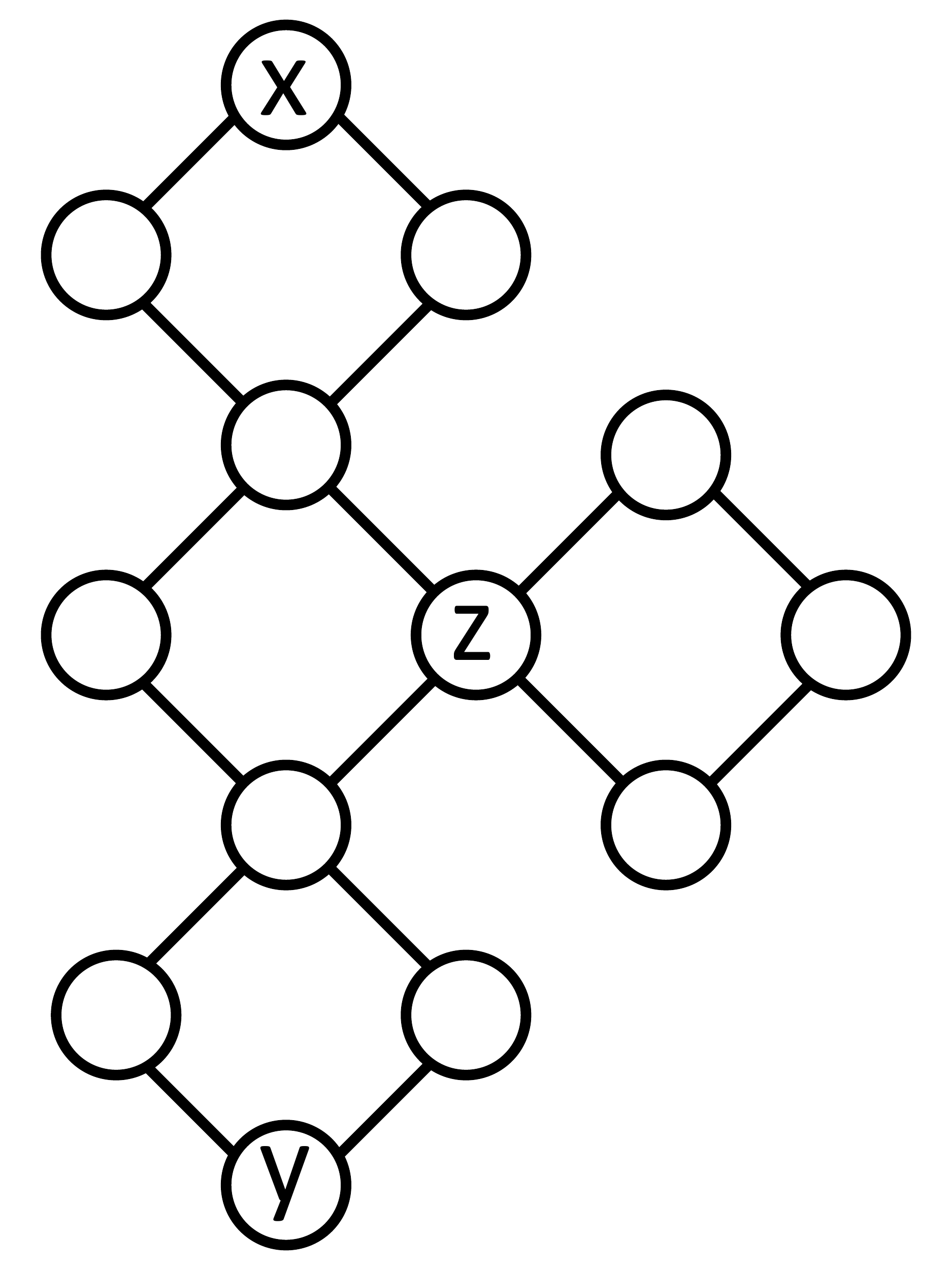}
		\end{center}
		\caption{Structure of the AND wire described in section \ref{sec:ANDgate} and which simulates an AND gate.}
		\label{ANDgate}
	\end{figure}
	
	So, it is easy to check that such a wire simulates an AND gate. In fact:
	\begin{itemize}
		\item If $x=y=1$ then in 3 time steps $z=1$ (see Figure \ref{ANDwire11}). 
		\item If $x=1$ and $y=-1$ then $z=-1$ always (see Figure \ref{ANDwire10}). Case $x=-1$ and $y=1$ is analogous.
		\item The case $x=y=-1$ is straightforward because the initial configuration would be the all -1s fixed point.
	\end{itemize}
	
	\begin{figure}[h] 
		\begin{center}    
			\includegraphics[scale=0.5]{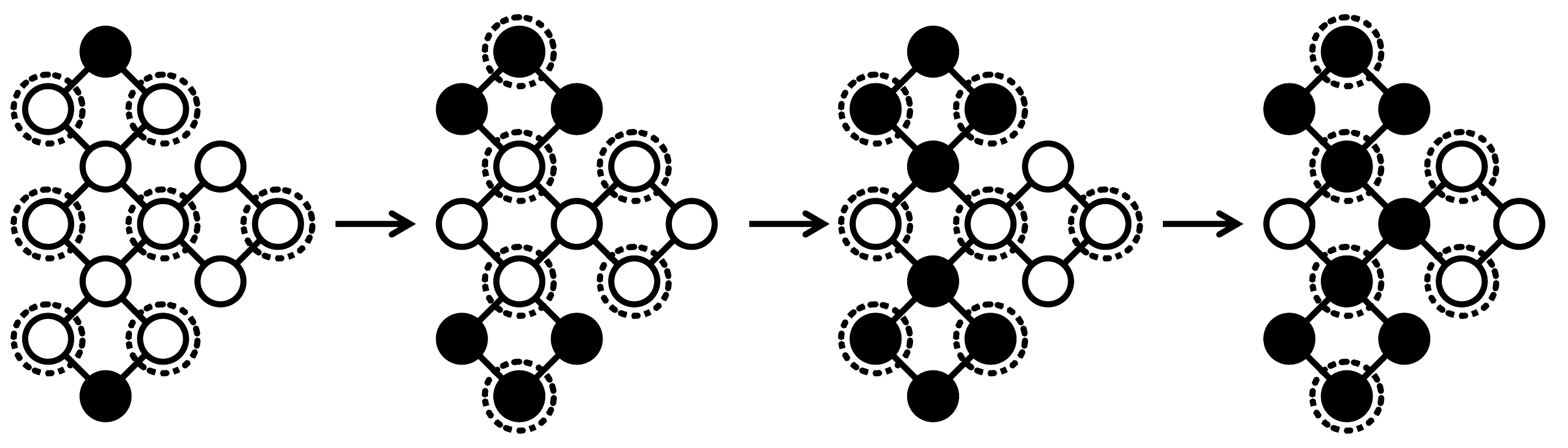} 
		\end{center}
		\caption{(Partial) evolution of the AND wire of Figure \ref{ANDgate} with input $x=y=1$ (black). The dashed circles indicate the nodes that will be updated at each time step.}
		\label{ANDwire11}
	\end{figure}
	
	\begin{figure}[h] 
		\begin{center}    
			\includegraphics[scale=0.5]{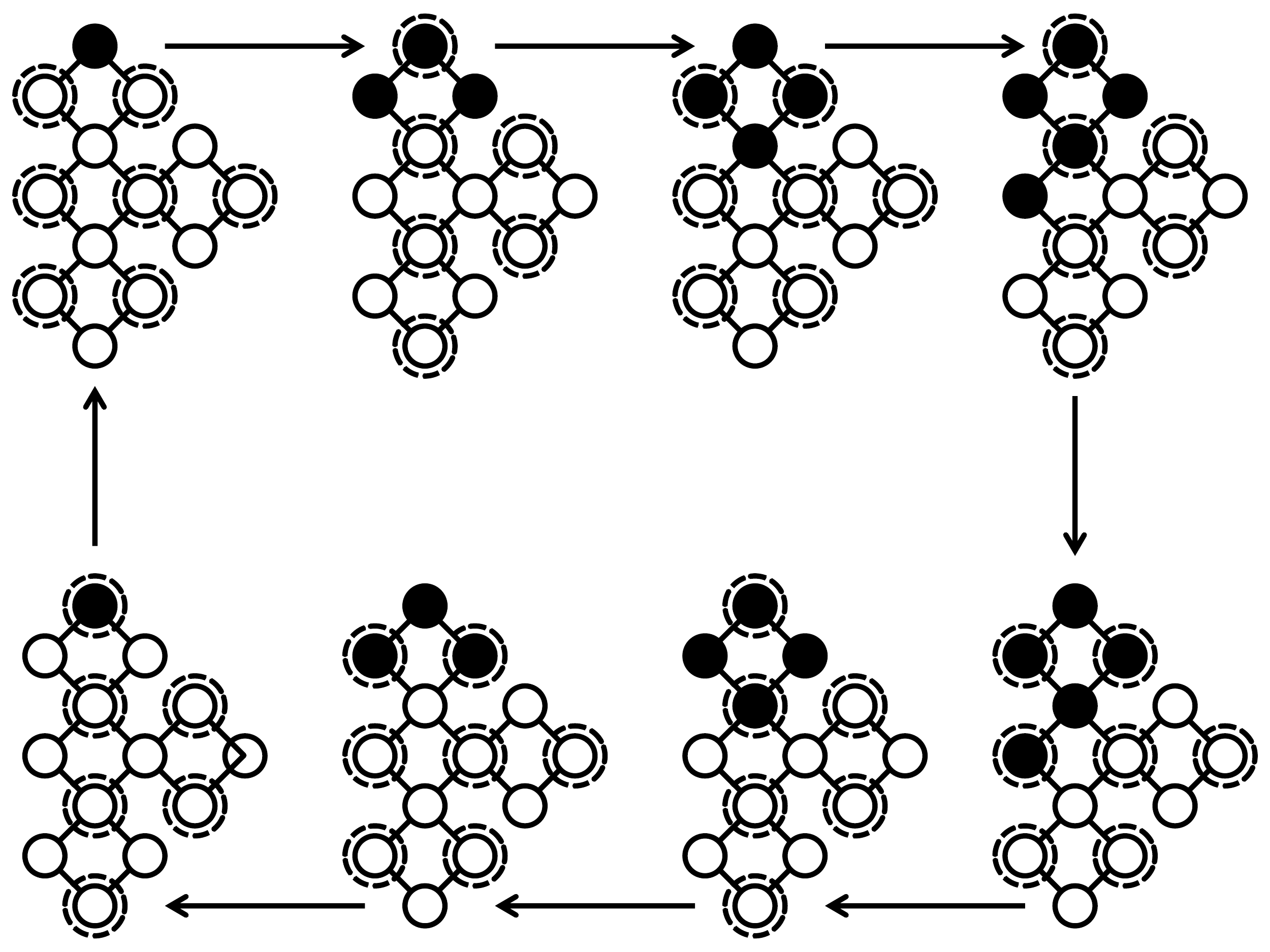} 
		\end{center}
		\caption{Evolution of the AND wire of Figure \ref{ANDgate} with input $x=1$ (black) and $y=-1$ (white). The dashed circles indicate the nodes that will be updated at each time step.}
		\label{ANDwire10}
	\end{figure}
	
	\subsection{The XOR gate}\label{sec:XORgate}
	Consider the wire of Figure \ref{XORgate} where $x,y\in\{-1,1\}$ are the inputs and $z$ is the output (initially at -1). The idea is to fix as stable sites two 1s. The inputs $x$, $y$ comes from two wires as above. 
	\begin{figure}[h] 
		\begin{center}   
			\includegraphics[scale=0.5]{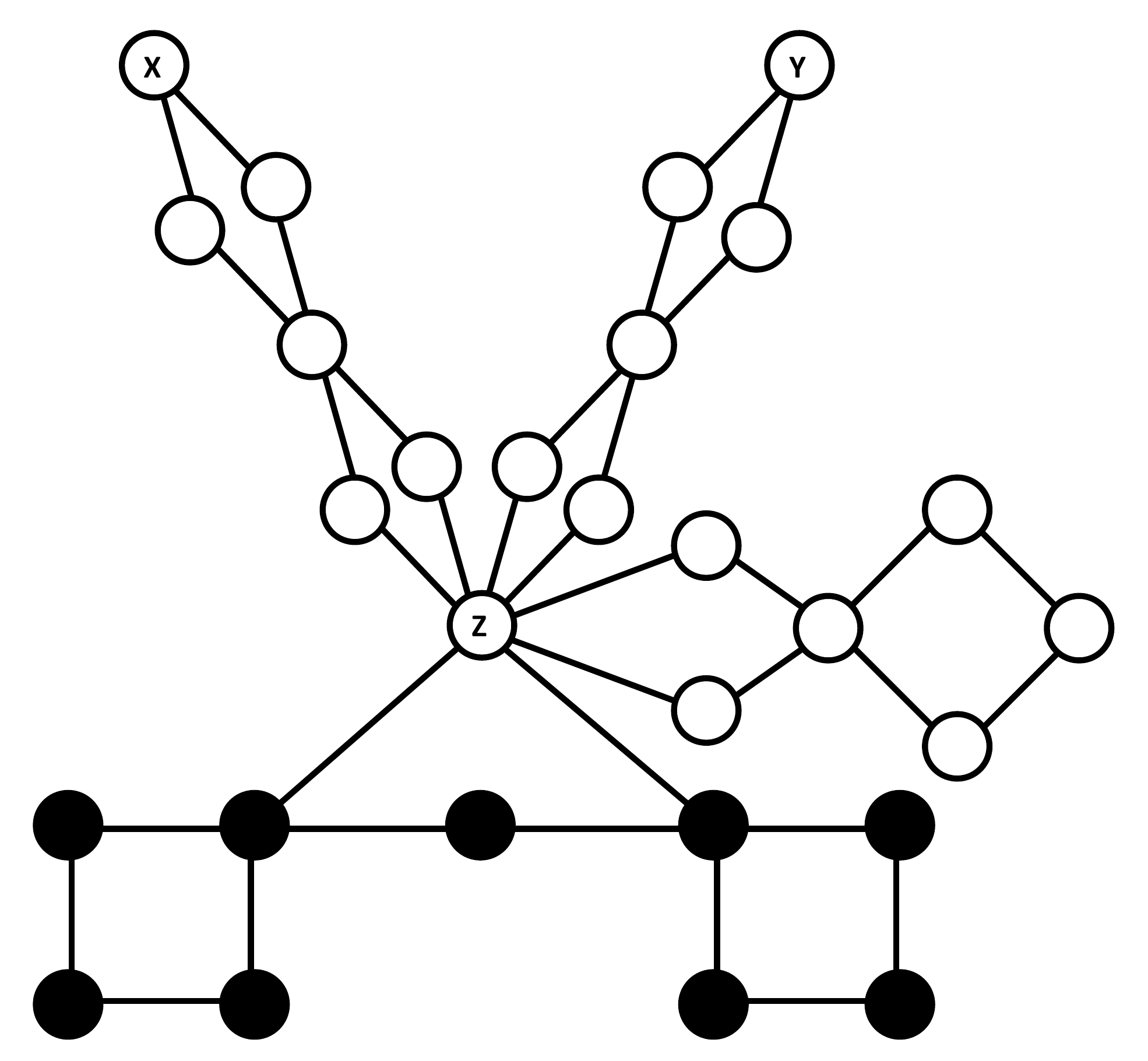} 
		\end{center}
		\caption{Structure of the XOR wire described in section \ref{sec:XORgate} and which simulates an XOR gate.}
		\label{XORgate}
	\end{figure}
	
	Clearly, $x+y=1 \Leftrightarrow [(x=1)\wedge (y=-1)] \vee [(x=-1)\wedge (y=1)]$. Moreover:
	\begin{itemize}
		\item If $x=1$ and $y=-1$, then $z=1$ in 3 time steps.
		\item If $x+y=2$, i.e., $x=y=1$, then the output $z$ remains at -1. 
	\end{itemize}
	
	\subsection{The NOT gate}\label{sec:NOT}
	Here we consider the fact $\neg x =1\oplus x$, so, it's enough to adapt the previous gadget, as in Figure \ref{NEGgate} (top left) where $x$ is the input and the output it's by the wire on the right.
	\begin{figure}[h] 
		\begin{center}   
			\includegraphics[scale=0.4]{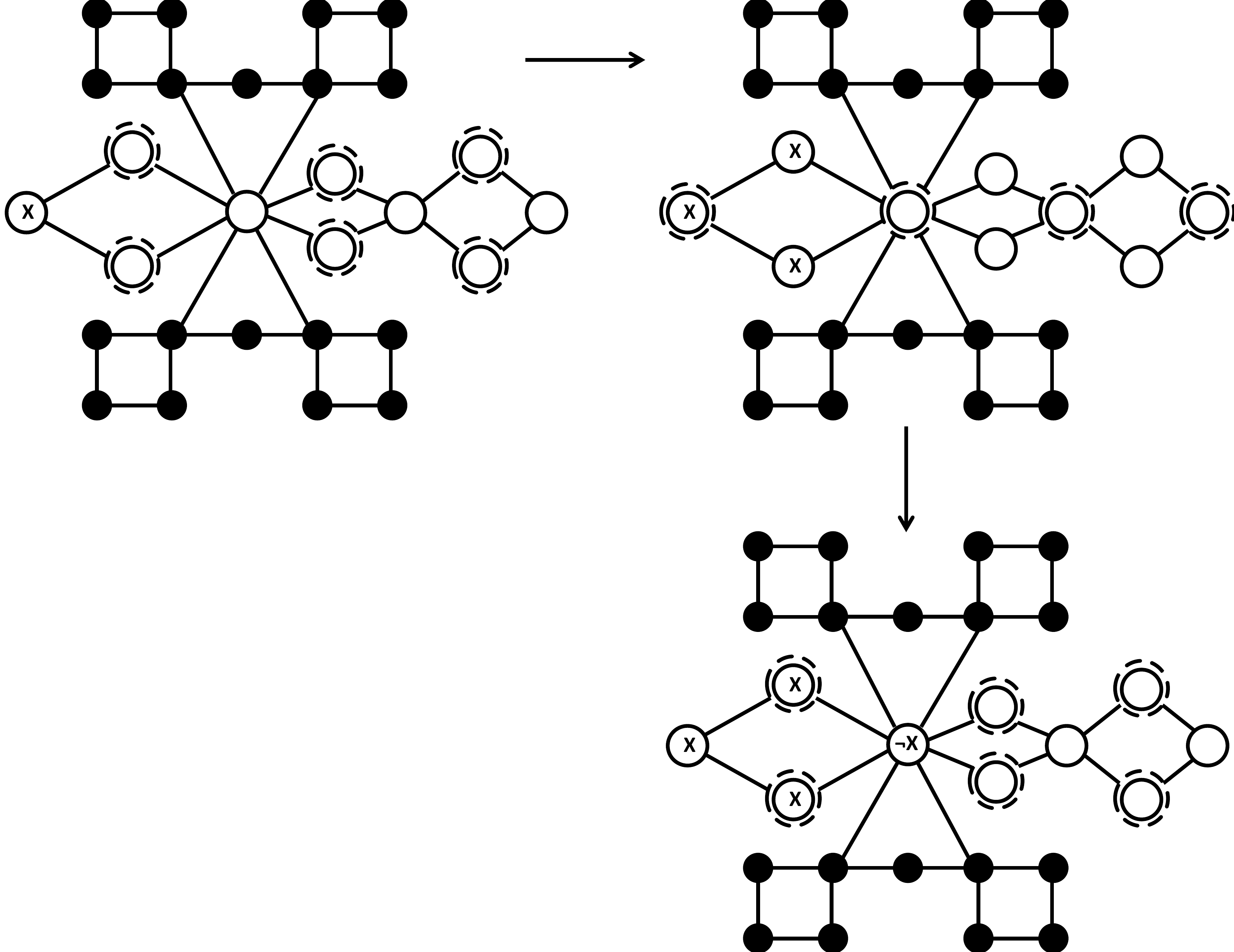} 
		\end{center}
		\caption{(Partial) evolution of the NEGATION wire described in section \ref{sec:NOT} with input $x\in\{-1,1\}$. The dashed circles indicate the nodes that will be updated at each time step.}
		\label{NEGgate}
	\end{figure}
	In Figure \ref{NEGgate} we schematize the dynamics that shows how an input $x\in\{-1,1\}$ is transformed into the output $\neg x$.
	
	\subsection{The OR gate}
	One may use the Morgan law: $x\vee y =\neg (\neg x \wedge \neg y)$ and to use the previous AND and NEGATION gates.
	
	\subsection{Cross-Over}
	Consider the structure of Figure \ref{crossover}., from that, by using previous  XOR gate, one constructs the cross-over. 
	\begin{figure}[h] 
		\begin{center}   
			\includegraphics[scale=0.5]{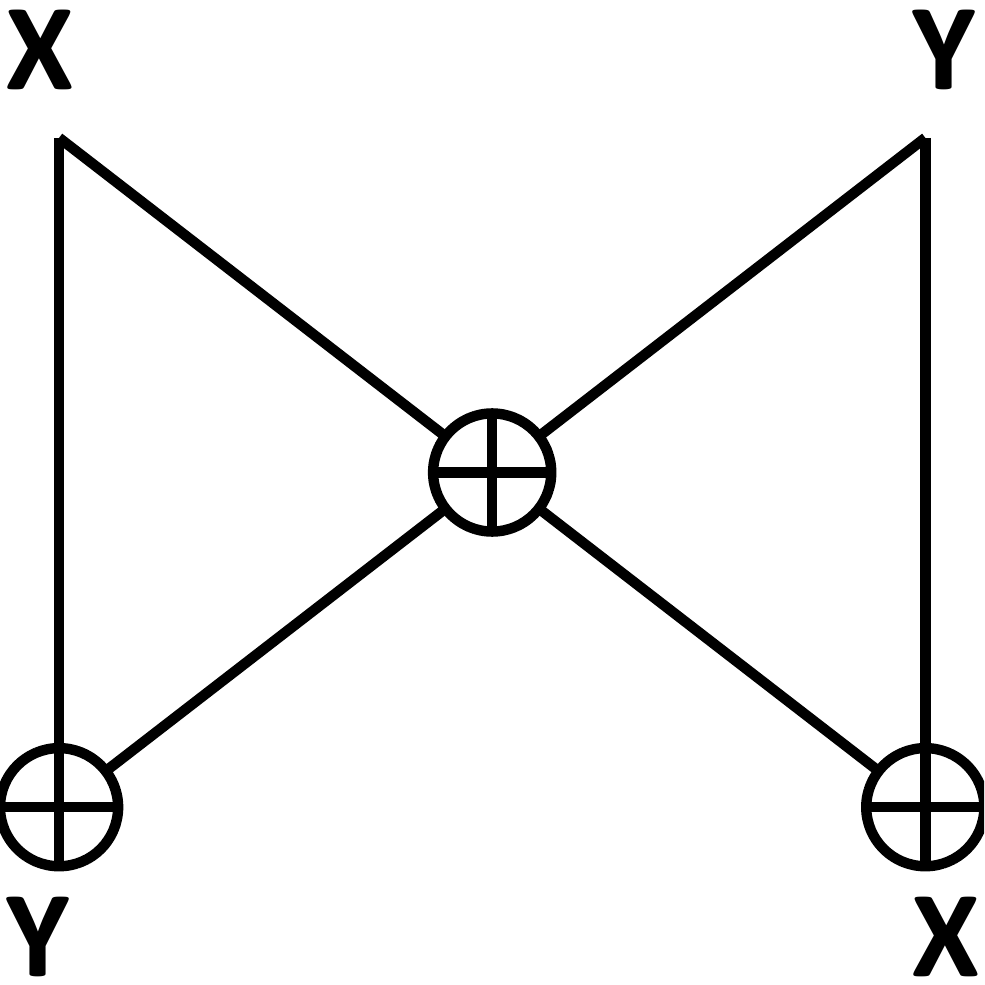} 
		\end{center}
		\caption{Structure of the cross-over.}
		\label{crossover}
	\end{figure}
	
	%
	%
	
	\section{Simulating Q2R with parallel update schedule}\label{sec:ps}
	Let $G=(V,E)$ be a Q2R network with both $n=|V(G)|$ and the degree of each node being even. In Figure \ref{sp-simulator} we present the general structure of what we will call the {\it Parallel Simulator (PS)}, whose function will be to simulate, with parallel update, the dynamics of the previous Q2R network updated by the block update $s=(A)(B)$, where $A$ and $B$ are two non-empty subsets of nodes such that $V=A\cup B$.
	\begin{figure}[h] 
		\centering
		\includegraphics[scale=0.2]{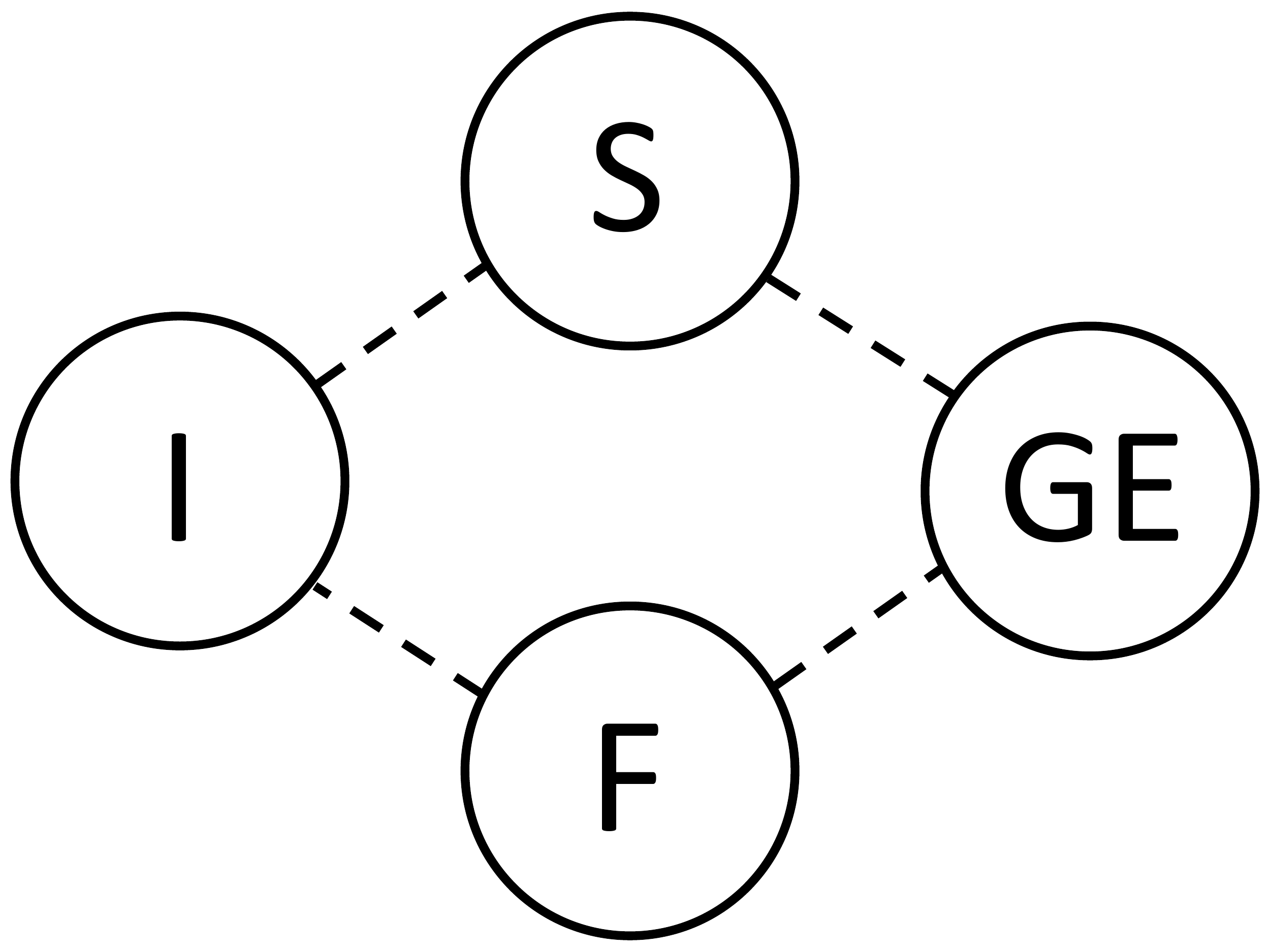}
		\caption{General structure of the Parallel Simulator (PS)}
		\label{sp-simulator}
	\end{figure}
	
	Basically, the PS is another Q2R network that also keeps the parity in both, its number of nodes and in its degrees. Furthermore, it possesses four components, denoted by $I$, $S$, $F$ and $GE$, that are connected in a certain way (see details in section \ref{sec_PS-connections}) and whose names and characteristics are described below:
	\begin{enumerate}
		\item $I$, the {\it Input component}, it contains the graph to be simulated in parallel.
		\item $S$, the {\it Switching component}, is a graph that, for a given initial configuration, always generates the same limit cycle, regardless of the connections it may have with the other PS components.
		\item $F$, the {\it Fixed component}, is a graph that, for a given initial configuration, never changes, regardless of the connections it may have with the other PS components.
		\item $GE$, the {\it Gear component}, is a graph that, for a given initial configuration, makes the PS works out.
	\end{enumerate}
	
	Next, we specify the above components as well as their connections and necessary notations to demonstrate that PS can simulate in parallel the dynamics of the Q2R network $G$ updated by the block update $s=(A)(B)$ of above.
	
	\subsection{Specifying the PS components}\label{sec_PS-components}
	\begin{itemize}
		\item The input component will be $G$, i.e., $I=G$ and we denote by $\alpha$ the half of the maximum degree of $G$, i.e., $\alpha=\frac{\max\{d(v):\,v\in V(G)\}}{2}$.
		\item The switching component has $\alpha$ identical graphs, denoted by $S^i$, $i\in\{1,...,\alpha\}$, of four nodes each. For $i\in\{1,...,\alpha\}$, we denote by $s_{1}^{i}$, $s_{2}^{i}$, $s_{3}^{i}$ and $s_{4}^{i}$ such a nodes. The graph $S^i$ and its initial configuration are showed in Figure \ref{fig_Scomp} a) and b), respectively. Thus, the limit cycle we want to keep will be the one of length two that alternates between the initial and final configurations showed in Figure \ref{fig_Scomp} b) and c), and that can be obtained by iterating, in an isolating way, any graph $S^i$.
		\begin{figure}[htbp!]
			\begin{center}
				\begin{tabular}{lll}
					a) \includegraphics[scale=.15]{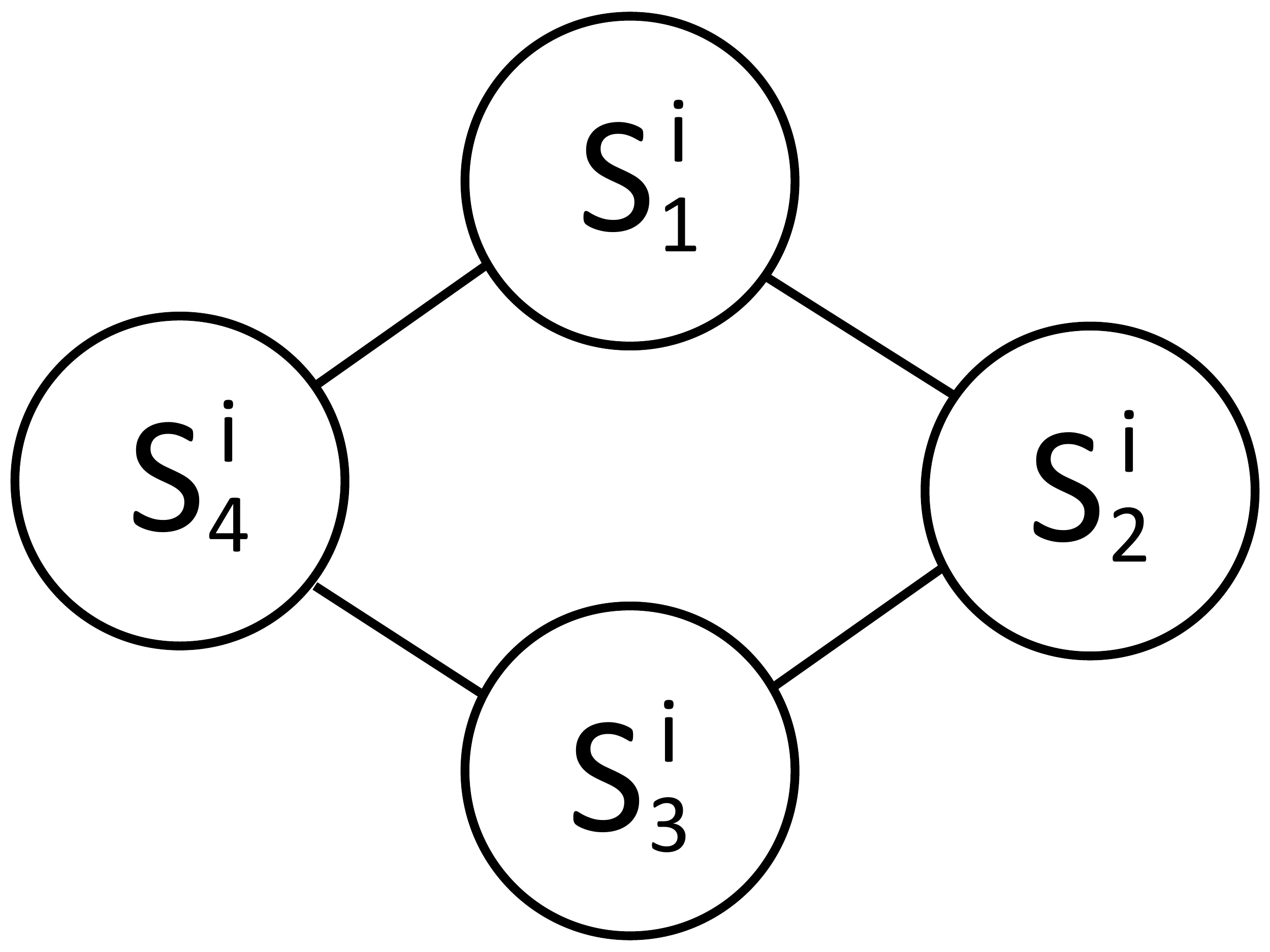} & b) \includegraphics[scale=.15]{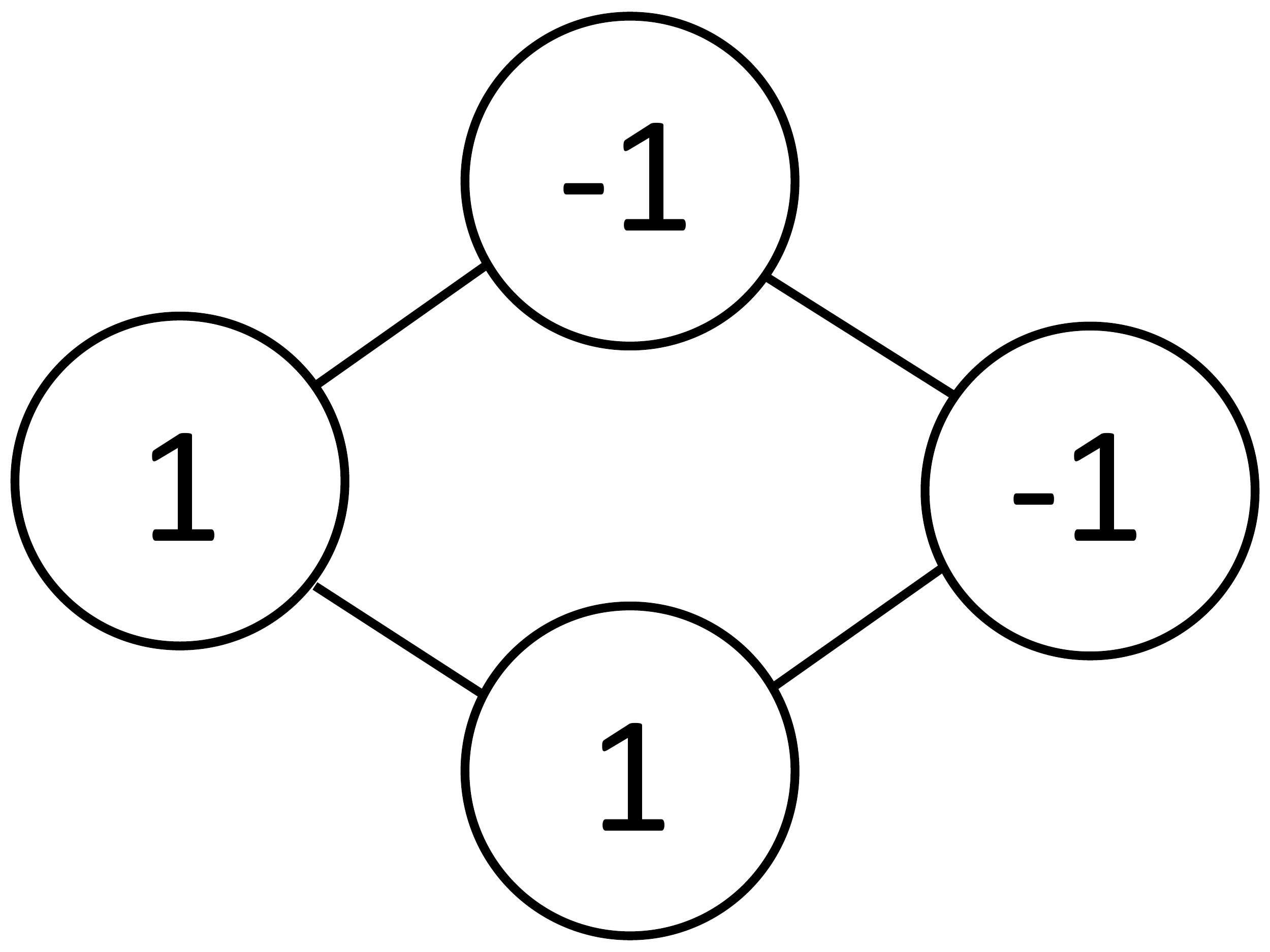} & c) \includegraphics[scale=.15]{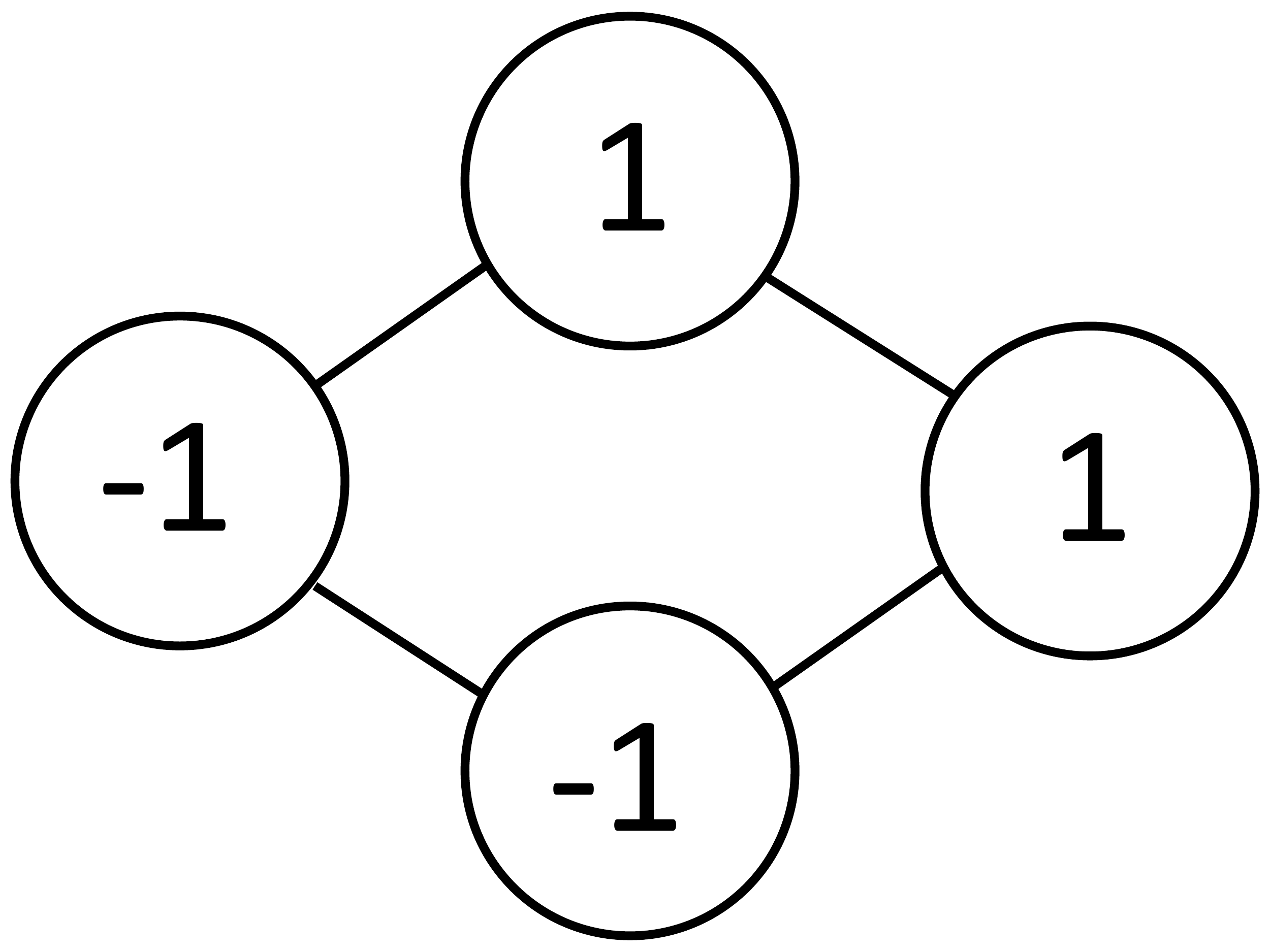}
				\end{tabular}
				\caption{a) A graph $S^i$ of the switching component. b) Initial configuration in $S^i$. c) Final configuration in $S^i$.} \label{fig_Scomp}
			\end{center}
		\end{figure}
		\item The fixed component has $\alpha$ identical graphs, denoted by $F^i$, $i\in\{1,...,\alpha\}$, of four nodes each. For $i\in\{1,...,\alpha\}$, we denote by $f_{1}^{i}$, $f_{2}^{i}$, $f_{3}^{i}$ and $f_{4}^{i}$ such a nodes. The graph $F^i$ and its initial configuration that we want to keep without changes are showed in Figure \ref{fig_Fcomp} a) and b). Observe that the initial configuration is a fixed point of any $F^i$ dynamics when it is iterated in an isolating way.
		\begin{figure}[htbp!]
			\begin{center}
				\begin{tabular}{ll}
					a) \includegraphics[scale=.2]{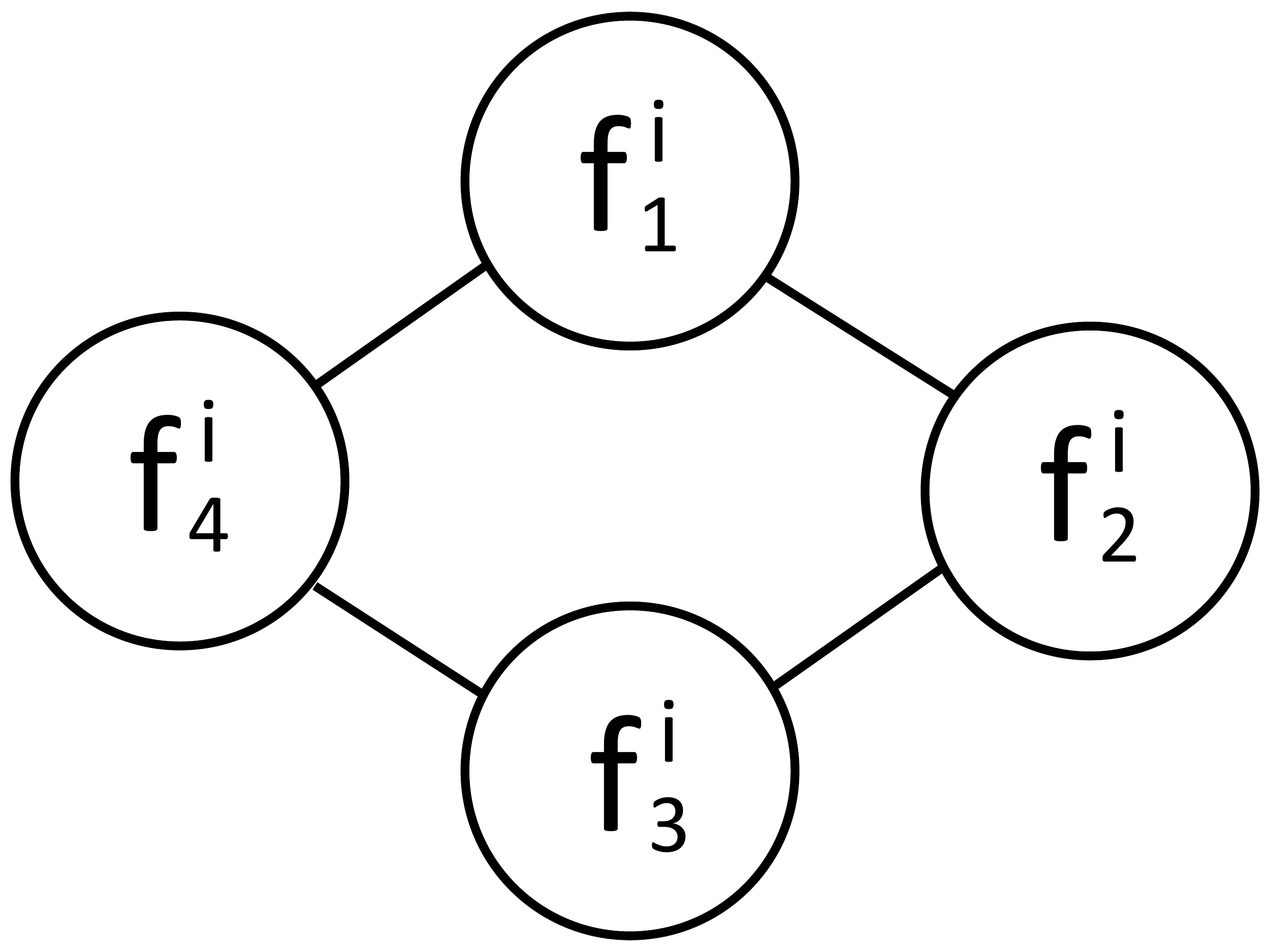} & b) \includegraphics[scale=.2]{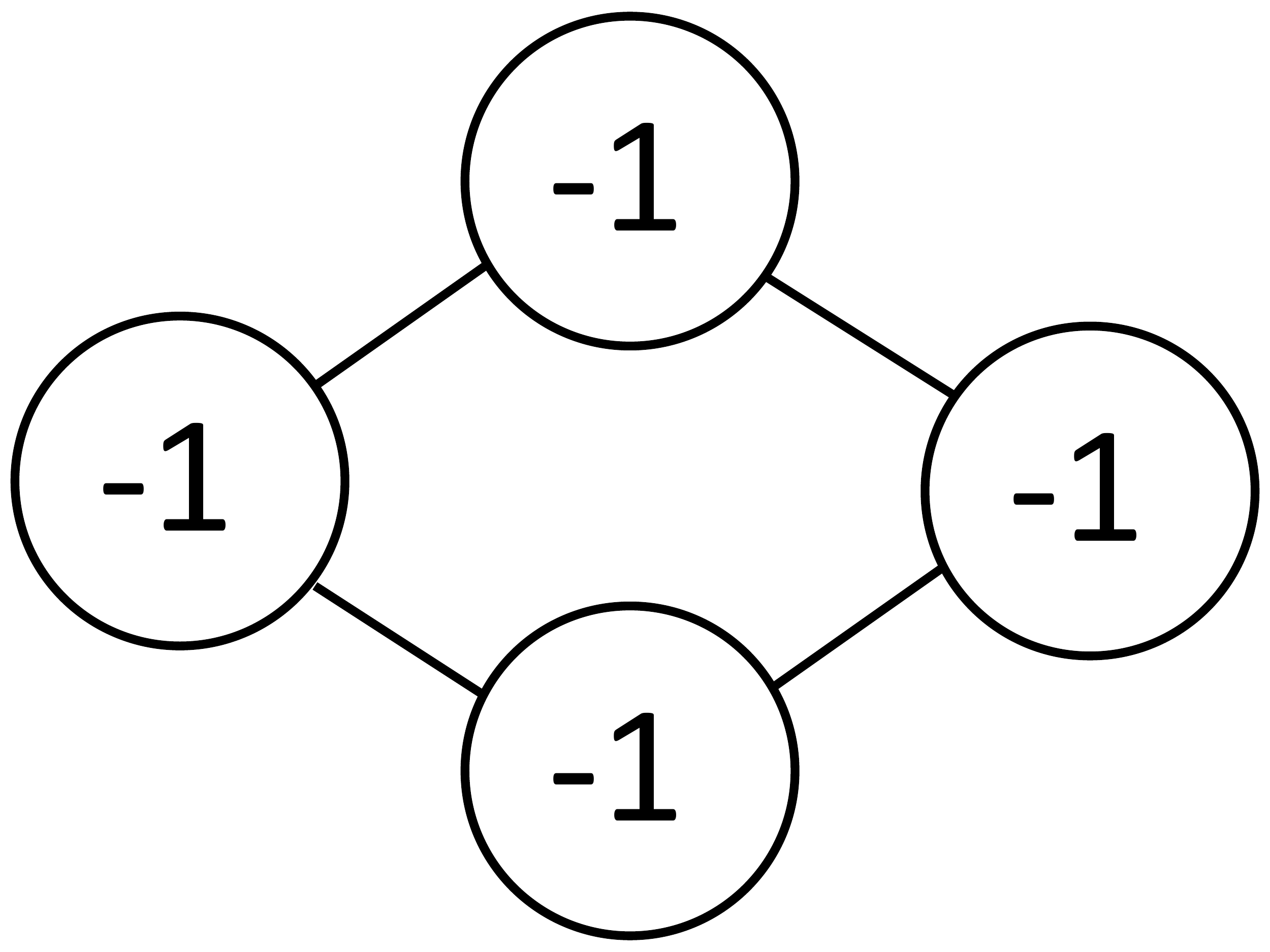} 
				\end{tabular}
				\caption{a) A graph $F^i$ of the fixed component. b) Initial configuration (fixed) of $F^i$.} \label{fig_Fcomp}
			\end{center}
		\end{figure}
		\item The gear component will be the same network $G$ but with the negation of all the states of the initial configuration of $G$, i.e., if we denote by $\overline{1}$,...,$\overline{n}$ the nodes of the gear component, then $x_{\overline{i}}=\neg x_i$, for all $i\in\{1,...,n\}$.
	\end{itemize}
	
	\subsection{Specifying the PS connections}\label{sec_PS-connections}
	Let's consider the Q2R network $G$ updated by the block update $s=(A)(B)$, where $A$ and $B$ are two non-empty subsets of nodes such that $V=A\cup B$. Moreover, we denote $\overline{A}=\{\overline{u}:\,u\in A\}$ and $\overline{B}=\{\overline{v}:\,v\in B\}$. Besides, given the nodes $u\in A$ and $v\in B$, we denote by $k$ and $j$ the half of its degrees in $G$, respectively, i.e., $k=\frac{d(u)}{2}$ and $j=\frac{d(v)}{2}$. Suppose w.l.o.g. that $1\leq j\leq k=\alpha$. Then, we consider the following PS connections (see Figure \ref{fig:PS-connections}):
	\begin{itemize}
		\item[(a)] The edges $\{u,s^i_3\}$, $\{u,s^i_4\}$ and $\{\overline{u},s^i_3\}$, $\{\overline{u},s^i_4\}$, for $i\in\{1,...,k=\alpha\}$.
		\item[(b)] The edges $\{v,s^i_1\}$, $\{v,s^i_2\}$ and $\{\overline{v},s^i_1\}$, $\{\overline{v},s^i_2\}$, for $i\in\{1,...,j\}$.
		\item[(c)] The edges $\{u,f^i_1\}$, $\{u,f^i_4\}$ and $\{\overline{u},f^i_1\}$, $\{\overline{u},f^i_4\}$, for $i\in\{1,...,k=\alpha\}$.
		\item[(d)] The edges $\{v,f^i_2\}$, $\{v,f^i_3\}$ and $\{\overline{v},f^i_2\}$, $\{\overline{v},f^i_3\}$, for $i\in\{1,...,j\}$.
	\end{itemize}
	\begin{figure}[h] 
		\centering
		\includegraphics[scale=0.5]{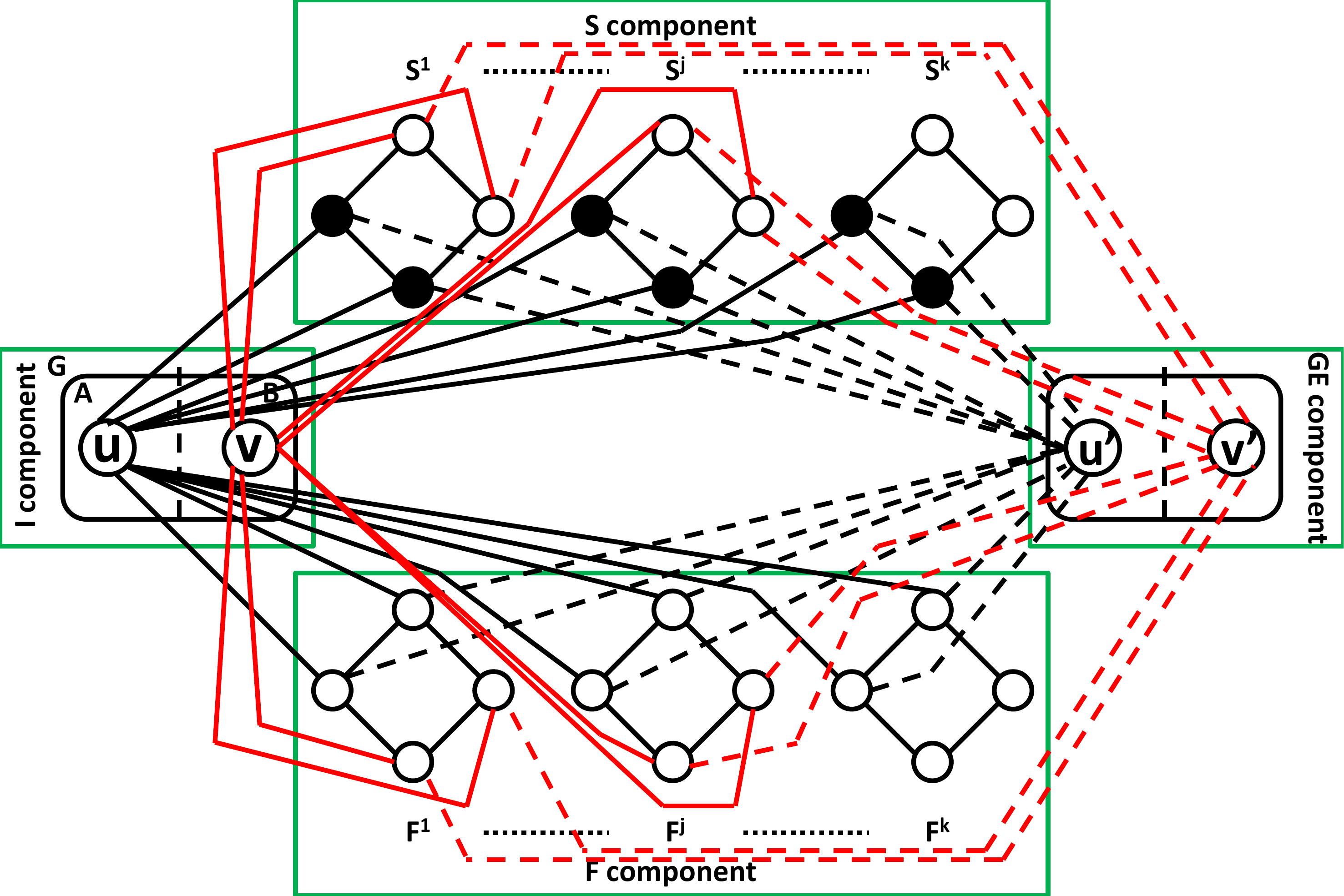} 
		\caption{The specific connections of the PS described in sections \ref{sec_PS-components} and \ref{sec_PS-connections}.}
		\label{fig:PS-connections}
	\end{figure}
	
	\begin{theorem}\label{teo-PS}
		One time step of the Q2R network $G$ updated by the block sequential $s=(A)(B)$ correspond to two time steps of the Q2R network $G$ updated by the PS in parallel.
	\end{theorem}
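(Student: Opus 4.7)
The plan is to track, over two parallel PS time-steps, the configurations of the four components $I$, $GE$, $S$, $F$, and verify that $I$ ends up with exactly the configuration produced by one block-sequential step $s=(A)(B)$ applied to $G$. My strategy is first to establish symmetry and invariance lemmas that decouple the auxiliary components $S$ and $F$ from $I$ and $GE$, and only then perform the local accounting on the input component.

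First, I would prove that the gear component is, at every time $t$, the pointwise negation of the input component. By the connection rules each $u\in A$ and its gear twin $\overline{u}$ share identical $S$-neighbours ($s^i_3,s^i_4$) and $F$-neighbours ($f^i_1,f^i_4$), while their $G$-neighbourhoods lie in $B$ and $\overline{B}$ with mutually negated states. Since the Q2R rule fires exactly when the neighbour sum is zero (a condition symmetric under sign change), $u$ and $\overline{u}$ update simultaneously, preserving the negation. A parallel argument works for $v\in B$. As a consequence, for every node of $S$ or $F$ the contributions from any pair $(u,\overline{u})$ (or $(v,\overline{v})$) cancel, so $F^i$ and $S^i$ behave as if isolated: $F^i$ remains at its initial fixed point of Figure~\ref{fig_Fcomp}(b), and $S^i$ oscillates with period two between the configurations of Figure~\ref{fig_Scomp}(b) and~(c).

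Next I would carry out the local accounting on $I$. Denote by $\sigma_{S,w}(t)$ and $\sigma_{F,w}$ the contributions from $S$ and $F$ to a node $w$ at time $t$. By direct inspection of Figures~\ref{fig_Scomp} and~\ref{fig_Fcomp} together with the connection rules (a)--(d), one obtains
\[
\sigma_{S,u}(0)+\sigma_{F,u}=0,\qquad \sigma_{S,u}(1)+\sigma_{F,u}\neq 0,
\]
and
\[
\sigma_{S,v}(0)+\sigma_{F,v}\neq 0,\qquad \sigma_{S,v}(1)+\sigma_{F,v}=0,
\]
where in the nonzero cases the offset has absolute value strictly larger than $d_G(w)$, so the total sum of neighbours at $w$ cannot vanish. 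Hence at $t=0$ only the nodes of $A$ are eligible to update and they do so exactly according to the Q2R rule on $G$; at $t=1$, since $B$ did not move in the first step, the $G$-sums at $v\in B$ already reflect the updated $A$-states, and now only $B$-nodes are eligible, each firing exactly when its $G$-sum vanishes. This is precisely the two-stage evaluation of $s=(A)(B)$, yielding the claim.

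The main obstacle is the bookkeeping in the last paragraph: certifying that the specific four-node gadgets in Figures~\ref{fig_Scomp} and~\ref{fig_Fcomp} deliver the required signed sums --- namely, that each pair $\{s^i_3,s^i_4\}$ contributes $0$ at $t=0$ and a nonzero amount at $t=1$, while each pair $\{s^i_1,s^i_2\}$ does the opposite for $v$, and that the fixed pairs $\{f^i_1,f^i_4\}$ and $\{f^i_2,f^i_3\}$ supply the matching compensations. Once these gadget-level identities are verified, combining them with the negation invariance of $GE$ and the isolation of $S$ and $F$ from $I$ yields the two-steps-for-one correspondence.
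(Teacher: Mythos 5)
Your proposal is correct and follows essentially the same route as the paper's proof: track two parallel steps, show the switching gadgets oscillate with period two and the fixed gadgets stay put, and verify that the $S\cup F$ contribution to a node of $A$ (resp.\ $B$) is zero at $t=0$ (resp.\ $t=1$) and is large enough (magnitude $2d$) to block any update at the other step. The only real difference is that you make explicit the negation invariance between the input and gear copies, which the paper uses only implicitly to keep the external contributions to $S^i$ and $F^i$ cancelling after the first step; just note that this invariance does not follow from sign-symmetry of the firing condition alone, because the twins $u$ and $\overline{u}$ \emph{share} their $S\cup F$ neighbours rather than seeing negated ones, so it must be established jointly with the accounting showing that this shared contribution is either zero or blocks both twins simultaneously.
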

	\begin{proof}
		Let the PS with its components and connections specified in sections \ref{sec_PS-components} and \ref{sec_PS-connections} at $t=0$. In this time step, we have that:
		\begin{enumerate}
			\item The PS connections defined in (a) and (b) guarantee that the sum of the neighbors of $s^i_j$ is 0, $\forall i\in\{1,...,\alpha\}$, $\forall j\in\{1,...,4\}$. 
			\item The PS connections defined in (c) and (d) guarantee that the sum of the neighbors of $f^i_j$ is different to 0 (more precise, -2), $\forall i\in\{1,...,\alpha\}$, $\forall j\in\{1,...,4\}$.
			\item The edges $\{u,s^i_3\}$, $\{u,s^i_4\}$, $\{u,f^i_1\}$ and $\{u,f^i_4\}$, $i\in\{1,...,\frac{d(u)}{2}\}$, $\frac{d(u)}{2}\leq\alpha$, contribute with 0 to the sum of the neighbors of $u$, for all $u\in A$, i.e., the update of $u$ only depends on the values of its neighbors in $G$. The same occurs if we change the roles of $u$ and $A$ by $\overline{u}$ and $\overline{A}$, respectively.
			\item The edges $\{v,s^i_1\}$, $\{v,s^i_2\}$, $\{v,f^i_2\}$ and $\{v,f^i_3\}$, $i\in\{1,...,\frac{d(v)}{2}\}$, $\frac{d(v)}{2}\leq\alpha$, contribute with $-4\cdot\left(\frac{d(v)}{2}\right)$ to the sum of the neighbors of $v$, for all $v\in B$. Thus, such a sum will be upper bounded by $-d(v)$, i.e., this sum will be different to 0. The same occurs if we change the roles of $v$ and $B$ by $\overline{v}$ and $\overline{B}$, respectively.
		\end{enumerate}
		
		Hence, at $t=1$: 
		\begin{enumerate}
			\setcounter{enumi}{4}
			\item All the $\alpha$ graphs of the switching component will change to the final configuration of Figure \ref{fig_Scomp}c) (because of point 1), having again what was said in point 1.
			\item All the $\alpha$ graphs of the fixed component will not change, i.e., they will continue with the same values of Figure \ref{fig_Fcomp}b) (because of point 2), having again what was said in point 2.
			\item Only the nodes of $A$ and $\overline{A}$ will have been updated (because of point 3) while those of $B$ and $\overline{B}$ remain unchanged (because of point 4).
			\item The edges $\{u,s^i_3\}$, $\{u,s^i_4\}$, $\{u,f^i_1\}$ and $\{u,f^i_4\}$, $i\in\{1,...,\frac{d(u)}{2}\}$, $\frac{d(u)}{2}\leq\alpha$, contribute with $-4\cdot\left(\frac{d(u)}{2}\right)$ to the sum of the neighbors of $u$, for all $u\in A$. Thus, such a sum will be upper bounded by $-d(u)$, i.e., this sum will be different to 0. The same occurs if we change the roles of $u$ and $A$ by $\overline{u}$ and $\overline{A}$, respectively.
			
			\item The edges $\{v,s^i_1\}$, $\{v,s^i_2\}$, $\{v,f^i_2\}$ and $\{v,f^i_3\}$, $i\in\{1,...,\frac{d(v)}{2}\}$, $\frac{d(v)}{2}\leq\alpha$, contribute with 0 to the sum of the neighbors of $v$, for all $v\in B$, i.e., the update of $v$ only depends on the values of its neighbors in $G$. The same occurs if we change the roles of $v$ and $B$ by $\overline{v}$ and $\overline{B}$, respectively. 
		\end{enumerate} 
		Therefore, at $t=2$, the PS will have exactly the first time step of the Q2R network $G$ updated by the block sequential $s=(A)(B)$.
	\end{proof}
	
	\section{Discussion}
	

	In this article we have studied some dynamic and complexity properties of the generalized Q2R automaton. It has been shown that it admits non-polynomial cycles and, by considering the parallel update, it may simulate the classical Q2R model (which is based in a block-sequential update scheme, following a bipartite partition of vertices). It is important to point out that the general structure of Parallel Simulator (PS) can be extended to simulate the dynamics of any other Q2R network under an arbitrary block-sequential update scheme $s'=(B_1)\cdots(B_k)$, $1<k\leq n.$ These latter task can be accomplished by choosing graphs in the switching component that generate appropriate cycles of length $k$ so that, at the time step $t$, the input component of the PS only updates the nodes of $B_t$ for $1\leq t\leq k.$ Thus, one time step of the Q2R network updated by the block sequential $s'=(B_1)\cdots(B_k)$, $1<k\leq n$, correspond to $k$ time steps of the Q2R network $G$ updated by the PS. 
	
	Furthermore, from a computational complexity standpoint, we shown that the problem of predicting whether the state of a cell changes in $t$ steps is \textbf{P}-Hard. It should be noted that such complexity is associated with the degree of vertices in the bipartite graph. In fact, the one dimensional (finite) Q2R cellular automaton corresponds to the elementary cellular automaton given by the rule $150$ (which is defined by a XOR local function depending of the left, center and right cells) defined over a ring. Since the latter global rule is a linear function (when considered as a function over $(\{0,1\},+,\cdot)$, where $+$ is the XOR function and $\cdot$ is the AND function), it is possible to represent it as a Boolean matrix (which coincides with the adjacency matrix of the graph). Thus, we can compute efficiently $t$ time steps of the dynamics (and in particular, the state of a cell) by simply computing the $t$-th power of this matrix. Nevertheless,  the computational complexity of this prediction problem in the two dimensional grid remains still open. In fact, even when our wire,  AND and OR gadgets can be implemented on such a grid, it is not possible to directly implement the cross over gadget since it needs the XOR gadget, which in our construction, requires $8$ neighbors. However, we observe that, if the condition on the even degree of each vertex is relaxed, which is equivalent to consider freezing sites (remaining fixed during the Q2R dynamics because odd degree implies no ties) then, smaller gadgets can be implemented. We believe that in this case, a XOR gadget can be embedded in a subgraph of the two dimensional grid. Still, even if we succeed on implementing that latter gadget, it is a very different problem far from the essence of the classical Q2R.

	
	
	\bibliographystyle{elsarticle-num}
	\bibliography{bibliov2}
	
	
	
	
	
	

\end{document}